\RequirePackage{amsmath}
\documentclass[11pt,envcountsame]{llncs}
\usepackage{amssymb}
\usepackage{color,graphicx,setspace}
\usepackage{geometry}
\geometry{a4paper,tmargin=2.9 cm,bmargin=2.9 cm,lmargin=3.cm,rmargin=3.cm}

\usepackage{algorithm}
\usepackage[noend]{algpseudocode}





\newcommand\SSS{\mathcal S}



\title{Faster Minimization of Tardy Processing\newline Time on a Single Machine
}

\author{Karl Bringmann\inst{1}\thanks{This work is part of the project TIPEA that has received funding from the European
Research Council (ERC) under the European Union’s Horizon 2020 research and innovation
programme (grant agreement No. 850979).}\and
    Nick Fischer\inst{1}${}^\star$
    \and
    Danny Hermelin\inst{3} \and
    Dvir Shabtay\inst{3}\and
    Philip Wellnitz\inst{2}}
\institute{
    Saarland University and Max Planck Institute for Informatics, Saarbr\"ucken, Germany\\
    \email{bringmann@cs.uni-saarland.de}, \email{nfischer@mpi-inf.mpg.de}\\[2ex]
    \and
    Max Planck Institute for Informatics, Saarbr\"ucken, Germany\\
    \email{wellnitz@mpi-inf.mpg.de}\\[2ex]
	\and
    Department of Industrial Engineering and Management,\\Ben-Gurion University of the Negev, Beersheba, Israel\\
    \email{\{hermelin,dvirs\}@bgu.ac.il}
}


\authorrunning{K. Bringmann, N. Fischer, D. Hermelin, D. Shabtay, and P. Wellnitz}






\begin{document}
\maketitle

\begin{abstract}
    This paper is concerned with the $1||\sum p_jU_j$ problem, the problem  of minimizing the total processing time of tardy jobs on a single machine. This is not only a fundamental scheduling problem, but also a very important problem from a theoretical point of view as it generalizes the Subset Sum problem and is closely related to the 0/1-Knapsack problem. The problem is well-known to be NP-hard, but only in a weak sense, meaning it admits pseudo-polynomial time algorithms. The fastest known pseudo-polynomial time algorithm for the problem is the famous Lawler and Moore algorithm which runs in $O(P \cdot n)$ time, where $P$ is the total processing time of all $n$ jobs in the input. This algorithm has been developed in the late 60s, and has yet to be improved to date.

    In this paper we develop two new algorithms for $1||\sum p_jU_j$, each improving on Lawler and Moore's algorithm in a different scenario:
    \begin{itemize}
        \item Our first algorithm runs in $\tilde{O}(P^{7/4})$ time\footnote{Throughout the paper we use $\tilde{O}(\cdot)$ to suppress logarithmic factors.}, and outperforms Lawler and Moore's algorithm in instances where $n=\tilde{\omega}(P^{3/4})$.
        \item Our second algorithm runs in $\tilde{O}(\min\{P \cdot D_{\#}, P + D\})$ time, where $D_{\#}$ is the number of \emph{different} due dates in the instance, and $D$ is the sum of all different due dates. This algorithm improves on Lawler and Moore's algorithm when $n=\tilde{\omega}(D_{\#})$ or $n=\tilde{\omega}(D/P)$. Further, it extends the known $\tilde{O}(P)$ algorithm for the single due date special case of $1||\sum p_jU_j$ in a natural way.
    \end{itemize}
    \smallskip
    Both algorithms rely on basic primitive operations between sets of integers and vectors of integers for the speedup in their running times. The second algorithm relies on fast polynomial multiplication as its main engine, while for the first algorithm we define a new ``skewed'' version of $(\max,\min)$-convolution which is interesting in its own right.

\end{abstract}
\section{Introduction}

In this paper we consider the problem of minimizing the total processing times of tardy jobs on a single machine. In this problem we are given a set of $n$ jobs $J=\{1,\ldots,n\}$, where each job~$j$ has a \emph{processing time} $p_j \in \mathbb{N}$ and a \emph{due date} $d_j \in \mathbb{N}$. A \emph{schedule} $\sigma$ for $J$ is a permutation $\sigma: \{1,\ldots,n\} \to \{1,\ldots,n\}$. In a given schedule $\sigma$, the \emph{completion time} $C_j$ of a job $j$ under $\sigma$ is given by $C_j = \sum_{\sigma(i) \leq \sigma(j)} p_i$, that is, the total processing time of jobs preceding $j$ in $\sigma$ (including $j$ itself). Job $j$ is \emph{tardy} in $\sigma$ if $C_j > d_j$, and \emph{early} otherwise. Our goal is find a schedule with minimum total processing time of tardy jobs. If we assign a binary indicator variable $U_j$ to each job $j$, where $U_j=1$ if $j$ is tardy and otherwise $U_j=0$, our objective function can be written as $\sum p_j U_j$. In the standard three field notation for scheduling problems of Graham~\cite{Graham1969}, this problem is denoted as the $1||\sum p_j U_j$ problem (the 1 in the first field indicates a single machine model, and the empty second field indicates there are no additional constraints).

The $1|| \sum p_j U_j$ problem is a very natural and fundamental scheduling problem, which models a very basic scheduling scenario. As it includes Subset Sum as a special case (see below), the $1|| \sum p_j U_j$ problem is NP-hard. However, it is only hard in the weak sense, meaning it admits pseudo-polynomial time algorithms. The focus of this paper is on developing fast pseudo-polynomial time algorithms for $1|| \sum p_j U_j$, improving in several settings on the best previously known solution from the late 60s. Before we describe our results, we discuss the previously known state of the art of the problem, and describe how our results fit into this line of research.

\subsection{State of the Art}

A famous generalization of the $1|| \sum p_jU_j$ problem is the $1|| \sum w_j U_j$ problem. Here, each job $j$ also has a weight $w_j$ in addition to its processing time $p_j$ and due date $d_j$, and the goal is to minimize the total weight (as opposed to total processing times) of tardy jobs. This problem has already been studied in the 60s, and even appeared in Karp's fundamental paper from 1972~\cite{Karp72}. The classical algorithm of Lawler and Moore~\cite{LawlerMoore} for the problem is one of the earliest and most prominent examples of pseudo-polynomial algorithms, and it is to date the fastest known algorithm even for the special case of $1|| \sum p_jU_j$. Letting $P= \sum_{j \in J} p_j$, their result can be stated as follows:
\begin{theorem}[\cite{LawlerMoore}]
    \label{thm:LawlerMoore}%
    $1|| \sum w_jU_j$ and $1|| \sum p_jU_j$ can both be solved in $O(P \cdot n)$ time.
\end{theorem}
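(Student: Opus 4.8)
The plan is to combine a classical exchange argument with a two–dimensional dynamic program. First I would prove the structural lemma that there is an optimal schedule in which \emph{all} early jobs are processed first, in non‑decreasing order of due dates (EDD order), followed by the tardy jobs in arbitrary order. Take any optimal schedule and let $E^*$ be its set of early jobs. Scheduling only the jobs of $E^*$ in their original relative order can only decrease completion times, so every job of $E^*$ still finishes on time; hence $E^*$ is \emph{feasible}, i.e.\ it admits an all‑early schedule. Next, whenever two consecutive early jobs $i,j$ (with $i$ immediately before $j$) have $d_i>d_j$, swapping them leaves every other completion time unchanged while $C_j^{\mathrm{new}}=C_i^{\mathrm{old}}<C_j^{\mathrm{old}}\le d_j$ and $C_i^{\mathrm{new}}=C_j^{\mathrm{old}}\le d_j<d_i$, so both stay early; repeating this bubble‑sort step turns any feasible set into a feasible EDD‑ordered schedule. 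Appending $J\setminus E^*$ afterwards yields a schedule whose objective is at most $\sum_{j\notin E^*}w_j$, which equals the optimum; so this schedule is optimal and has the claimed form.

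Given the lemma, the task reduces to selecting a feasible subset $S\subseteq J$ of jobs to run early and minimizing $\sum_{j\notin S} w_j$. Assume the jobs are indexed so that $d_1\le d_2\le\dots\le d_n$. I would define the table $F[j][t]$ as the minimum total weight of the tardy jobs among $\{1,\dots,j\}$ over all choices in which the selected early jobs have total processing time exactly $t$ (and $+\infty$ if no feasible such choice exists), with $F[0][0]=0$ and $F[0][t]=\infty$ for $t>0$, and with the convention that out‑of‑range entries are $\infty$. Since job $j$ has the largest due date among $\{1,\dots,j\}$, adding it to the early set leaves all earlier completion times unchanged and only imposes the constraint that its own completion time $t$ satisfy $t\le d_j$; otherwise job $j$ is tardy and contributes $w_j$. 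This gives the transition
\[
F[j][t]\;=\;\min\bigl\{\, F[j-1][t-p_j]\ \ (\text{if }p_j\le t\le d_j),\ \ F[j-1][t]+w_j \,\bigr\},
\]
and the optimum value is $\min_{0\le t\le P} F[n][t]$. Correctness follows by induction on $j$, using the structural lemma to argue that an optimal solution corresponds to a feasible EDD‑ordered early set.

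For the running time, the index $t$ ranges over $\{0,1,\dots,P\}$ and $j$ over $\{1,\dots,n\}$, so the table has $O(Pn)$ entries, each computed in $O(1)$ time from two previously computed entries, for a total of $O(Pn)$. The weights $w_j$ enter only additively in the ``tardy'' branch, so the same algorithm solves $1||\sum w_jU_j$; specializing to $w_j=p_j$ for all $j$ handles $1||\sum p_jU_j$. If an actual schedule rather than just its value is required, one stores back‑pointers and recovers $S$ by standard backtracking, then outputs $S$ in EDD order followed by $J\setminus S$.

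I expect the only genuine subtlety to be the exchange argument, and specifically the step that a set of jobs which is feasible in \emph{some} order remains feasible when reordered by EDD; the rest is a routine prefix‑length dynamic program. Care is also needed to state the feasibility invariant of $F$ precisely so that the one–line transition is seen to be both sound (never over‑counting an infeasible set) and complete (reaching every feasible set), but this is a direct induction once the invariant is fixed.
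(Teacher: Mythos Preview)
The paper does not prove this theorem; it is quoted as a known result of Lawler and Moore and only the structural EDD lemma (Lemma~\ref{lem:EDD}) is restated, again with a citation rather than a proof. There is therefore nothing in the paper to compare your argument against.

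That said, your reconstruction is the standard Lawler--Moore argument and is correct. The exchange argument establishing that some optimal schedule is EDD is exactly the content the paper cites as Lemma~\ref{lem:EDD}, and your adjacent-swap proof of it is the usual one. Your dynamic program is also sound: with jobs indexed so that $d_1\le\cdots\le d_n$, the invariant that $F[j][t]$ is the minimum tardy weight over feasible early sets $S\subseteq\{1,\dots,j\}$ with $\sum_{i\in S}p_i=t$ is preserved by the transition, since making job $j$ early places it last among the early jobs of $\{1,\dots,j\}$ and hence gives it completion time exactly $t$, while leaving the completion times of the earlier early jobs unchanged. The $O(Pn)$ bound is immediate, and the specialization $w_j=p_j$ covers $1||\sum p_jU_j$. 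The only cosmetic point is that ties among due dates are handled implicitly by your fixed indexing, which is fine.
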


Note that as we assume that all processing times are integers, we have $n \leq P$, and so the running time of the algorithm in Theorem~\ref{thm:LawlerMoore} can be bounded by $O(P^2)$. In fact, it makes perfect sense to analyze the time complexity of a pseudo-polynomial time algorithm for either problems in terms of $P$, as~$P$ directly corresponds to the total input length when integers are encoded in unary. Observe that while the case of $n=P$ (all jobs have unit processing times) essentially reduces to sorting, there are several non-trivial cases where $n$ is smaller than $P$ yet still quite significant in the $O(P \cdot n)$ term of Theorem~\ref{thm:LawlerMoore}. The fundamental question this paper addresses is:
\begin{quote}
    \centering
    ``\emph{Can we obtain algorithms with running times $O(P^{2-\varepsilon})$, for any fixed $\varepsilon > 0$,\\for either $1|| \sum w_jU_j$ or $1|| \sum p_jU_j$} ?''
\end{quote}

For $1|| \sum w_jU_j$ there is some evidence that the answer to this question should be no. Karp~\cite{Karp72} observed that the special case of the $1||\sum w_jU_j$ problem where all jobs have the same due date $d$, the $1|d_j=d|\sum w_jU_j$ problem, is essentially equivalent to the classical 0/1-Knapsack problem. Cygan \emph{et al.}~\cite{CyganMWW19} and K{\"{u}}nnemann \emph{et al.}~\cite{KunnemannPS17} studied the $(\min,+)$-Convolution problem (see Section~\ref{section:preliminaries}), and conjectured that the $(\min,+)$-convolution between two vectors of length $n$ cannot be computed in $\tilde{O}(n^{2-\varepsilon})$ time, for any $\varepsilon > 0$. Under this $(\min,+)$-Convolution Conjecture, they obtained lower bounds for several Knapsack related problems. In our terms, their result can be stated as follows:
\begin{theorem}[\cite{CyganMWW19,KunnemannPS17}]
    \label{thm:Known LB1}%
    There is no $\tilde{O}(P^{2-\varepsilon})$ time algorithm for the $1|d_j=d| \sum w_jU_j$ problem, for any $\varepsilon > 0$, unless the $(\min,+)$-Convolution Conjecture is false. In particular, $1|| \sum w_jU_j$ has no such algorithm under this conjecture.
\end{theorem}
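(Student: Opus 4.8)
The plan is to derive the claimed lower bound from the $(\min,+)$-Convolution Conjecture by composing two reductions, exploiting Karp's observation~\cite{Karp72} that the single-due-date problem is a disguised $0/1$-Knapsack. It suffices to show that an $\tilde{O}(P^{2-\varepsilon})$-time algorithm for $1|d_j=d|\sum w_jU_j$ would yield a truly subquadratic algorithm for $(\min,+)$-convolution; the ``in particular'' clause then follows since $1|d_j=d|\sum w_jU_j$ is exactly the all-equal-due-dates special case of $1||\sum w_jU_j$ and leaves the parameter $P$ unchanged.

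\textbf{From Knapsack to scheduling.} First I would record the folklore equivalence in a parameter-aware form. Given a $0/1$-Knapsack instance with item sizes $s_1,\dots,s_n$, profits $v_1,\dots,v_n$ and capacity $t$, create one job per item with $p_j:=s_j$, $w_j:=v_j$ and common due date $d:=t$. In any schedule the set $E$ of early jobs is precisely a set with $\sum_{j\in E}p_j\le d$ (the due dates coincide, so the internal order of early jobs is irrelevant), and the objective equals $\bigl(\sum_j v_j\bigr)-\sum_{j\in E}v_j$, so optimal schedules correspond to optimal packings. The scheduling parameter is $P=\sum_j p_j=\sum_j s_j$, and we may assume $t\le P$ (otherwise everything fits), so an $\tilde{O}(P^{2-\varepsilon})$ scheduling algorithm solves this Knapsack instance in time $\tilde{O}\bigl((\sum_j s_j)^{2-\varepsilon}\bigr)$.

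\textbf{From $(\min,+)$-convolution to Knapsack.} Next I would invoke the reduction of \cite{CyganMWW19,KunnemannPS17}, used on the \emph{decision} version of $(\min,+)$-convolution (given length-$m$ vectors $a,b,c$, is $c[k]\le\min_{i+j=k}(a[i]+b[j])$ for every $k$?), which is fine-grained equivalent to computing the convolution~\cite{CyganMWW19}. After rescaling the entries to be $O(m)$-bounded, one builds a constant number of item groups whose sizes are written in a positional number system so that a feasible packing (i) selects an index $k$, (ii) selects an $a$-index $i$ and a $b$-index $j$ with $i+j=k$, this relation being forced by the capacity, and (iii) reaches a prescribed target profit exactly when the $k$-th constraint is violated, i.e.\ $a[i]+b[j]<c[k]$. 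Thus a feasible packing of profit at least the target exists iff the decision instance is a negative instance. All item sizes, profits, the capacity, and the number of items stay $\tilde{O}(m)$; in particular $\sum_j s_j=\tilde{O}(m)$.

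\textbf{Composition, and the main obstacle.} Chaining the two reductions, an $\tilde{O}(P^{2-\varepsilon})$ algorithm for $1|d_j=d|\sum w_jU_j$ decides $(\min,+)$-convolution of length-$m$ vectors in time $\tilde{O}(m^{2-\varepsilon})$, hence, by the decision-to-computation equivalence, computes it in the same time, refuting the $(\min,+)$-Convolution Conjecture. The delicate step is the gadget construction: one must simultaneously force the binary choices to pick exactly one index from each input vector, make the capacity faithfully encode the additive relation $i+j=k$ with no carries bleeding between the positional blocks, and keep every quantity---the sizes, the profits, the capacity, and above all $\sum_j s_j$---within polylogarithmic factors of $m$, since any stray $m^{\delta}$ factor there would degrade the conclusion below $P^{2-\varepsilon}$. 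Steps one and three are routine bookkeeping once the single-due-date structure is noticed.
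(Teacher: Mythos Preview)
The paper does not supply a proof of this theorem: it is quoted as a known result from \cite{CyganMWW19,KunnemannPS17}, with the only justification being the preceding remark that, by Karp's observation, $1|d_j=d|\sum w_jU_j$ is essentially $0/1$-Knapsack. Your outline---Karp's equivalence composed with the $(\min,+)$-convolution-to-Knapsack hardness reduction of the cited papers---is therefore exactly the intended derivation, and there is nothing further in the paper to compare against.

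You are also right to single out the one non-routine point in translating the cited Knapsack bound (usually phrased in terms of the capacity $t$, or $n+t$) into a bound in the parameter $P=\sum_j p_j$: the hard instances must satisfy $\sum_j s_j=\tilde O(m)$, not merely $t=\tilde O(m)$. Be aware that the most commonly quoted gadget constructions produce $\Theta(m)$ items each of size up to $\Theta(m)$, so $\sum_j s_j$ is a priori $\Theta(m^2)$; getting the statement in terms of $P$ rather than $n+d$ does require choosing the right variant of the reduction or a short additional argument. The paper does not address this either---it simply cites the result---so your flagging of it is appropriate, but if you intend this as a self-contained proof you would need to close that step rather than defer it.
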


Analogous to the situation with $1||\sum w_jU_j$, the special case of $1||\sum p_j U_j$ where all jobs have the same due date $d$ (the $1|d_j=d|\sum p_j U_j$ problem) is equivalent to the classical Subset Sum problem. Recently, there has been significant improvements for Subset Sum resulting in algorithms with $\tilde{O}(T+n)$ running times~\cite{Bring17,KoiliarisX17}, where $n$ is number of integers in the instance and $T$ is the target. Due to the equivalence between the two problems, this yields the following result for the $1|d_j=d| \sum p_jU_j$ problem:
\begin{theorem}[\cite{Bring17,KoiliarisX17}]
    \label{thm:SubsetSumAlg}%
    $1|d_j=d| \sum p_jU_j$ can be solved in $\tilde{O}(P)$ time.
\end{theorem}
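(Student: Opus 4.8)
The plan is to prove Theorem~\ref{thm:SubsetSumAlg} by spelling out the folklore reduction from $1|d_j=d|\sum p_j U_j$ to Subset Sum and then invoking the recent near-linear time Subset Sum algorithms. First I would observe that, because all jobs share the due date $d$, an optimal schedule is fully described by its set $S\subseteq J$ of early jobs: a standard exchange argument lets us assume that all early jobs precede all tardy jobs (swapping an early job that comes after a tardy one never increases the number or total size of tardy jobs), and since the due dates all coincide the jobs in $S$ may be sequenced in an arbitrary order without affecting their earliness. Consequently, a set $S$ arises as the set of early jobs of some schedule if and only if $\sum_{j\in S}p_j\le d$, and in that case the objective value equals $\sum_{j\notin S}p_j=P-\sum_{j\in S}p_j$. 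Minimizing the tardy processing time is therefore equivalent to finding a subset of $\{p_1,\dots,p_n\}$ whose total processing time is as large as possible subject to not exceeding $d$.

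Next I would normalize the instance: we may assume $d\le P$, since if $d\ge P$ then every job is early in any schedule and the optimum is $0$. Hence the relevant Subset Sum target is $T:=\min\{d,P\}\le P$, and as processing times are positive integers we also have $n\le P$. Now I would invoke the Subset Sum algorithms of Bringmann~\cite{Bring17} and Koiliaris--Xu~\cite{KoiliarisX17}, which in $\tilde O(T+n)$ time produce the \emph{entire} set of attainable subset sums in $\{0,1,\dots,T\}$ (these algorithms work by repeatedly forming sumsets via FFT-based polynomial multiplication, controlling the number of such multiplications through a color-coding/divide-and-conquer scheme). From this set I would read off the largest attainable value $s^\star\le d$ and return $P-s^\star$. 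The overall running time is $\tilde O(T+n)=\tilde O(P)$.

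The only genuinely non-routine point is the correctness of the scheduling reduction, i.e.\ that feasibility of a candidate set of early jobs is captured exactly by the inequality $\sum_{j\in S}p_j\le d$; this rests on the (here trivial) optimality of the earliest-due-date rule together with the exchange argument that pushes early jobs ahead of tardy ones. On the Subset Sum side one must note that the cited results deliver not merely a yes/no answer for a single target but the whole collection of realizable sums up to $T$, so that the ``maximum sum not exceeding $d$'' variant we actually need is obtained without any loss. Everything else is bookkeeping.
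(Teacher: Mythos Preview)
Your proposal is correct and follows exactly the route the paper takes: the paper does not give a standalone proof of Theorem~\ref{thm:SubsetSumAlg} but simply notes that $1|d_j=d|\sum p_jU_j$ is equivalent to Subset Sum and invokes the $\tilde O(T+n)$ algorithms of~\cite{Bring17,KoiliarisX17}. Your write-up just fills in the details of that equivalence (the exchange argument, the normalization $d\le P$, and the observation that the cited algorithms output all attainable sums up to~$T$), all of which are routine and accurate.
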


On the other hand, due to equivalence of $1|d_j=d|\sum p_jU_j$ and Subset Sum, we also know that Theorem~\ref{thm:SubsetSumAlg} above cannot be significantly improved unless the Strong Exponential Time Hypothesis (SETH) fails. Specifically, combining a recent reduction from $k$-SAT to Subset Sum~\cite{AbboudBHS17} with the equivalence of Subset Sum and $1|d_j=d|\sum p_jU_j$, yields the following:
\begin{theorem}[\cite{AbboudBHS17}]
    \label{thm:Known LB2}%
    There is no $\tilde{O}(P^{1-\varepsilon})$ time algorithm for the $1|d_j=d| \sum p_jU_j$ problem, for any~$\varepsilon > 0$, unless SETH fails.
\end{theorem}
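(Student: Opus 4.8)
The plan is to derive the lower bound by composing two reductions: the SETH-based reduction from $k$-SAT to Subset Sum of Abboud \emph{et al.}~\cite{AbboudBHS17}, invoked as a black box, followed by an elementary reduction from Subset Sum to $1|d_j=d|\sum p_jU_j$ which I would supply myself. What I would extract from~\cite{AbboudBHS17} is: unless SETH fails, Subset Sum admits no $\tilde O(t^{1-\varepsilon})$-time algorithm for any $\varepsilon>0$, where $t$ denotes the target; and, unwinding their construction, the instances witnessing this hardness are obtained from $k$-SAT on $N$ variables, have target $t=2^{\Theta(N)}$, and consist of only $n=\mathrm{polylog}(t)$ integers, each of which we may assume is at most $t$ (integers larger than $t$ lie in no solution and can be discarded up front).

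Next I would make the second reduction explicit and confirm it is decision-preserving while keeping $P$ essentially equal to $t$. Given integers $x_1,\dots,x_n$ and target $t$, create $n$ jobs with processing times $p_j=x_j$ and a common due date $d_j=d:=t$. Since processing times are positive, completion times along any schedule are strictly increasing, so the set of early jobs is a prefix of the schedule and has total processing time at most $t$; conversely, for any $S\subseteq\{1,\dots,n\}$ with $\sum_{j\in S}x_j\le t$, the schedule that processes the jobs of $S$ first makes all of them early and hence has objective value at most $P-\sum_{j\in S}x_j$, where $P=\sum_j x_j$. Therefore the minimum total tardy processing time equals $P-M$ with $M=\max\{\sum_{j\in S}x_j : S\subseteq\{1,\dots,n\},\ \sum_{j\in S}x_j\le t\}$, so the original instance is a ``yes''-instance of Subset Sum if and only if this optimum equals $P-t$. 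Any algorithm for $1|d_j=d|\sum p_jU_j$ thus decides Subset Sum with only linear overhead, and since each $x_j\le t$ we have $P\le n\cdot t=t^{1+o(1)}$ on the instances produced by~\cite{AbboudBHS17}.

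Composing the two reductions, a hypothetical $\tilde O(P^{1-\varepsilon})$-time algorithm for $1|d_j=d|\sum p_jU_j$ would solve these Subset Sum instances in time $\tilde O(P^{1-\varepsilon})\le\tilde O(t^{(1+o(1))(1-\varepsilon)})=\tilde O(t^{1-\varepsilon'})$ for some fixed $\varepsilon'>0$, and hence would solve $k$-SAT on $N$ variables in time $2^{\Theta(N)(1-\varepsilon')}\cdot\mathrm{poly}(N)=2^{(1-\delta)N}$ for some fixed $\delta>0$ --- contradicting SETH. I expect the only non-routine ingredient to be the accounting around the black box: one has to verify that the Subset Sum instances which rule out $\tilde O(t^{1-\varepsilon})$ algorithms under SETH really do have $\mathrm{polylog}(t)$ integers and target $2^{\Theta(N)}$ (equivalently, that the ``$2^{o(n)}$''-type slack in the Subset Sum lower bound is harmless in this regime), since this is exactly what forces $P=t^{1+o(1)}$ and lets a running time polynomial in $P$ collapse to a $2^{(1-\delta)N}$ bound for $k$-SAT. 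The gadget reduction and the exponent arithmetic are otherwise routine.
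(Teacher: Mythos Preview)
Your proposal is correct and follows exactly the route the paper indicates: the paper does not give a proof of this theorem but simply states (in the sentence preceding it) that it follows by ``combining a recent reduction from $k$-SAT to Subset Sum~\cite{AbboudBHS17} with the equivalence of Subset Sum and $1|d_j=d|\sum p_jU_j$,'' which is precisely what you do. Your careful accounting that the hard Subset Sum instances have $n=\mathrm{polylog}(t)$ integers each bounded by $t$, so that $P=t^{1+o(1)}$, is the right way to make the one-line sketch rigorous.
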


Nevertheless, Theorem~\ref{thm:Known LB2} still leaves quite a big gap for the true time complexity of $1||\sum p_jU_j$, as it can potentially be anywhere between the $\tilde{O}(P)$ time known already for the special case of $1|d_j=d|\sum p_jU_j$ (Theorem~\ref{thm:SubsetSumAlg}), and the $O(Pn)=O(P^2)$ time of Lawler and Moore's algorithm (Theorem~\ref{thm:LawlerMoore}). This is the starting point of our paper.

\subsection{Our Results}


The main contribution of this paper is two new pseudo-polynomial time algorithms for $1||\sum p_jU_j$, each improving on Lawler and Moore's algorithm in a different sense. Our algorithms take a different approach to that of Lawler and Moore in that they rely on fast operators between sets and vectors of numbers.

Our first algorithm improves Theorem~\ref{thm:LawlerMoore} in case there are sufficiently many jobs in the instance compared to the total processing time. More precisely, our algorithm has a running time of $\tilde{O}(P^{7/4})$, and so it is faster than Lawler and Moore's algorithm in case $n=\tilde{\omega}(P^{3/4})$.
\begin{theorem}
    \label{thm:ConvScheduler}%
    $1|| \sum p_jU_j$ can be solved in $\tilde{O}(P^{7/4})$ time.
\end{theorem}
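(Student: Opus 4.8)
The plan is to reduce $1||\sum p_jU_j$ to a structured ``subset sum with prefix deadlines'' problem, solve that by divide-and-conquer over the jobs in earliest-due-date (EDD) order, and observe that the merge step is exactly a new \emph{skewed} $(\max,\min)$-convolution; the bulk of the work then goes into computing that convolution fast.

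\textbf{Normal form and a composable DP.} First I would invoke the classical exchange argument: some optimal schedule runs all early jobs before all tardy jobs, with the early jobs in EDD order. So after sorting the jobs so that $d_1\le\cdots\le d_n$ (and replacing $d_j$ by $\min\{d_j,P\}$, which changes nothing since every completion time is at most $P$), the task is to pick an early set $E=\{i_1<\cdots<i_m\}$ maximizing $\sum_{j\in E}p_j$ subject to $p_{i_1}+\cdots+p_{i_\ell}\le d_{i_\ell}$ for every $\ell$; the optimum of $1||\sum p_jU_j$ is $P$ minus this maximum. To make this composable, for a block $B$ of jobs that is contiguous in the EDD order and a target load $b\in\{0,\dots,P_B\}$ (with $P_B:=\sum_{j\in B}p_j$), define $f_B(b)$ as the largest machine start time $c$ for which one can feasibly schedule a subset of $B$ of total processing time exactly $b$ starting at time $c$, and $f_B(b)=-\infty$ if no such subset exists; all values are capped to lie in $\{-\infty\}\cup\{0,\dots,P\}$, which is lossless. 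The key point is monotonicity: a smaller start time is always at least as good, so $f_B$ completely summarizes what $B$ can do. For a single job, $f_{\{j\}}(0)=P$, $f_{\{j\}}(p_j)=d_j-p_j$, and $-\infty$ elsewhere; and if $B=B_1B_2$ is a concatenation of two EDD-contiguous blocks, chaining the two stages (start at $c$; clear $B_1$ with load $b_1$, which needs $c\le f_{B_1}(b_1)$; arrive at time $c+b_1$; clear $B_2$ with load $b_2=b-b_1$, which needs $c+b_1\le f_{B_2}(b_2)$) gives
\begin{equation}
  f_{B_1B_2}(b)\;=\;\max_{b_1+b_2=b}\ \min\!\bigl(f_{B_1}(b_1),\,f_{B_2}(b_2)-b_1\bigr).
  \label{eq:skewmaxmin}
\end{equation}
Run over the whole set $B^\star=\{1,\dots,n\}$, the answer is $P-\max\{b:f_{B^\star}(b)\ge 0\}$; note that when all due dates coincide, \eqref{eq:skewmaxmin} degenerates into a Boolean convolution, recovering Theorem~\ref{thm:SubsetSumAlg}.

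\textbf{Divide and conquer.} I would build a balanced binary tree with the jobs as leaves in EDD order, store at each node $v$ the vector $f_{B_v}$ (length $P_{B_v}+1$, entries in $\{-\infty\}\cup\{0,\dots,P\}$), and fill the tree bottom-up using \eqref{eq:skewmaxmin} at every internal node. Equation \eqref{eq:skewmaxmin} is precisely a \emph{skewed $(\max,\min)$-convolution}: the ordinary $(\max,\min)$-convolution twisted by the index-dependent offset $-b_1$. Granting that this operation on two vectors of total length $N$ with polynomially bounded entries can be done in $\tilde{O}(N^{7/4})$ time, the overall running time follows by a standard accounting: on one level of the tree the blocks $B_v$ partition the job set, so $\sum_v P_{B_v}=P$, and since $x\mapsto x^{7/4}$ is superadditive we get $\sum_v\tilde{O}(P_{B_v}^{7/4})\le\tilde{O}(P^{7/4})$ per level, hence $\tilde{O}(P^{7/4})$ over all $O(\log n)$ levels. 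So everything reduces to the skewed convolution.

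\textbf{The main obstacle: skewed $(\max,\min)$-convolution in $\tilde{O}(N^{7/4})$ time.} The ordinary $(\max,\min)$-convolution is easy to make subquadratic: ``$c[k]\ge\tau$'' is equivalent to an OR-AND (Boolean) convolution of the indicator vectors $[a[\cdot]\ge\tau]$ and $[b[\cdot]\ge\tau]$, which FFT computes in $\tilde{O}(N)$, and batching over thresholds gives $\tilde{O}(N^{3/2})$. The skew destroys this, because the event ``$b[j]-i\ge\tau$'' entangles the value threshold $\tau$ with the summation index $i$. The plan is a $\sqrt{\cdot}$-type split by a parameter $g$: contributions to \eqref{eq:skewmaxmin} in which the optimal index $b_1$ is at most $g$ are found by brute force over the at most $g$ candidates, in $O(gN)$ time; for the remaining contributions I would group the indices into $O(N/g)$ buckets of $g$ consecutive values, inside which the offset $-b_1$ is almost constant, reduce each bucket to ordinary $(\max,\min)$-type convolutions on the corresponding sub-vectors, and separately account for the residual variation of the offset within a bucket. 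Balancing the two parts sets $g\approx N^{3/4}$ and yields $\tilde{O}(N^{7/4})$. The delicate point I expect to fight with is making the within-bucket treatment \emph{exact} rather than merely approximate: this is where the special structure of the skew — a monotone shift in the summation index — must be exploited carefully, and it is what makes the skewed $(\max,\min)$-convolution a primitive of independent interest.
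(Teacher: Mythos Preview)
Your reduction of $1||\sum p_jU_j$ to a skewed $(\max,\min)$-convolution is correct and matches the paper almost exactly: the paper defines the same ``latest feasible start time'' vector (called $M(I)$ there), proves the same merge identity as your~\eqref{eq:skewmaxmin}, and combines blocks along a balanced binary tree with the same per-level superadditivity accounting. The only cosmetic difference is that the paper places the $D_\#$ due-date groups at the leaves (filling each leaf via one subset-sums computation) rather than individual jobs; either works.

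The gap is in your algorithm for the skewed convolution, which is the actual technical heart of the theorem. Bucketing the summation index $b_1$ into blocks of width $g$ and freezing the offset $-b_1$ inside each block yields, for every output, only an additive-$g$ approximation; you say yourself that making this exact is ``the delicate point I expect to fight with'', but no mechanism is offered, and I do not see one along the lines you sketch: turning an additive-$g$ estimate into an exact value appears to require examining $\Theta(g)$ candidate values per output, which cancels the saving. (The separate ``$b_1\le g$ by brute force'' case does not help, since nothing distinguishes small from large $b_1$.) The paper avoids this trap by working in \emph{value} space rather than index space. It scales all entries by powers of two and observes that the answer at scale $2^{\ell+1}$, doubled, approximates the answer at scale $2^\ell$ to within an additive \emph{constant}. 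With a width-$O(1)$ window in hand it calls an output $k$ \emph{light} if at most $n^{3/4}$ entries of $A$ land in $k$'s window (brute-forced in $\tilde{O}(n^{7/4})$ total); a sparsification step then deletes every $A$-entry whose $\pm 2$ value-neighborhood is thin, leaving only $O(n^{1/4})$ distinct $A$-values; one ordinary $(\max,\min)$-convolution per surviving value then handles all heavy outputs, again $\tilde{O}(n^{7/4})$. The decisive difference from your sketch is that the approximation carried between levels has \emph{constant} width, which is exactly what makes the light/heavy dichotomy pay off.
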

The algorithm in Theorem~\ref{thm:ConvScheduler} uses a new kind of convolution which we
coined ``Skewed Convolution'' and is interesting in its own right. In fact, one of the main technical contributions of this paper is a fast algorithm for the  $(\max,\min)$-Skewed-Convolution problem (see definition in Section~\ref{section:preliminaries}).

Our second algorithm for $1|| \sum p_jU_j$ improves Theorem~\ref{thm:LawlerMoore} in case there are not too many different due dates in the problem instance; that is, $D_{\#}=|\{d_j : j \in J\}|$ is relatively small when compared to $n$. This is actually a very natural assumption, for instance in cases where delivery costs are high and products are batched to only few shipments. Let~$D$ denote the sum of the different due dates in our instance. Then our second result can be stated as follows:
\begin{theorem}
    \label{thm:SumsetScheduler}
    $1|| \sum p_jU_j$ can all be solved in $\tilde{O}(\min\{P \cdot D_{\#}, P + D\})$ time.
\end{theorem}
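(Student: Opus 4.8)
The plan is to reduce $1||\sum p_jU_j$ to a sequence of set-manipulation problems indexed by the distinct due dates, and then use fast polynomial multiplication (i.e.\ sumset computation / Subset Sum techniques) as the workhorse. First I would recall the structural fact underlying Lawler and Moore's dynamic program: there is an optimal schedule in which the set~$E$ of early jobs is scheduled first in EDD (earliest-due-date) order, followed by the tardy jobs in arbitrary order. Consequently, if we sort the distinct due dates as $e_1 < e_2 < \dots < e_k$ with $k = D_{\#}$, and let $J_i$ be the set of jobs with due date exactly $e_i$, then a subset $E \subseteq J$ is \emph{feasible} as an early set precisely when, for every $i \in \{1,\dots,k\}$, the total processing time of the jobs of $E$ lying in $J_1 \cup \dots \cup J_i$ is at most $e_i$. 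The objective is then to maximize $\sum_{j \in E} p_j$ over all feasible~$E$, and the answer to the scheduling problem is $P$ minus this maximum.

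The key step is to compute, for each prefix $J_1 \cup \dots \cup J_i$, the set $S_i \subseteq \{0,1,\dots,e_i\}$ of all sums $\sum_{j \in E'} p_j$ that are achievable by some prefix-feasible selection $E' \subseteq J_1 \cup \dots \cup J_i$ (where prefix-feasible means the constraints for $e_1,\dots,e_i$ are all met). I would maintain these sets via the recurrence $S_i = \bigl( S_{i-1} \oplus \Sigma(J_i) \bigr) \cap \{0,\dots,e_i\}$, where $\Sigma(J_i)$ is the set of all subset sums of the processing times in $J_i$ and $\oplus$ denotes the sumset. Each sumset operation is a Boolean convolution of two vectors supported on $\{0,\dots,e_i\}$, computable in $\tilde{O}(e_i)$ time via FFT, and each $\Sigma(J_i)$ is itself obtained by the $\tilde{O}(P)$-time Subset Sum algorithm of Theorem~\ref{thm:SubsetSumAlg} (or, more carefully, in time near-linear in the sum of processing times within $J_i$ and within the relevant target). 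Truncating at $e_i$ after each step is what keeps the vectors short. Summing the per-step costs gives $\tilde{O}\bigl(\sum_i e_i\bigr) = \tilde{O}(D)$ for the convolutions, plus $\tilde{O}(P)$ for assembling all the $\Sigma(J_i)$'s, for a total of $\tilde{O}(P + D)$. For the $\tilde{O}(P \cdot D_{\#})$ bound I would instead run the Lawler--Moore-style dynamic program but only over the $D_{\#}$ distinct due-date ``checkpoints'': maintain a Boolean array of reachable early-sums of size $O(P)$, and at each of the $k$ checkpoints perform one convolution with the subset-sum set of the newly arrived jobs and one truncation, each costing $\tilde{O}(P)$; this gives $\tilde{O}(P \cdot D_{\#})$, and running whichever of the two procedures has the smaller bound yields the claimed $\tilde{O}(\min\{P \cdot D_{\#}, P+D\})$.

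The main obstacle I anticipate is making the truncation-and-sumset pipeline rigorous while keeping the total cost at $\tilde{O}(P+D)$ rather than $\tilde{O}(k \cdot P)$. The naive sumset $S_{i-1} \oplus \Sigma(J_i)$ can have entries up to $P$ before truncation, so one must argue that it suffices to compute it modulo the truncation threshold $e_i$ — i.e.\ to only ever convolve vectors of length $O(e_i)$ — which requires checking that no feasible solution is lost by discarding partial sums exceeding $e_i$ at stage~$i$ (true, since processing times are nonnegative, so an infeasible prefix can never become feasible later). A secondary technical point is bounding the cost of computing the $\Sigma(J_i)$'s: one should apply the Subset Sum algorithm to $J_i$ with target $\min(e_i, \sum_{j\in J_i} p_j)$, and then observe $\sum_i \min(e_i, \sum_{j \in J_i} p_j) \le \min(D, P)$, so this contributes no more than the convolution cost. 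Finally, one must handle the reconstruction of an actual optimal schedule (not just its value), which can be done by storing witnesses or by a standard self-reduction, at no asymptotic overhead.
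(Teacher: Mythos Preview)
Your proposal is correct and matches the paper's approach essentially line for line: partition jobs by due date, precompute all subset-sum sets $\Sigma(J_i)$ in total $\tilde O(P)$ time, then iterate the recurrence $S_i = (S_{i-1}\oplus\Sigma(J_i))\cap\{0,\dots,e_i\}$ and return $P-\max S_k$. The only cosmetic difference is that you describe the two running-time bounds as arising from two separate procedures, whereas the paper runs a single algorithm and simply analyzes its cost two ways (bounding each sumset by $\tilde O(P)$ gives $\tilde O(P\cdot D_\#)$, and bounding the $i$th sumset by $\tilde O(e_i)$ after truncation gives $\tilde O(P+D)$); your ``obstacle'' paragraph already contains exactly the argument needed for the latter.
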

The algorithm in Theorem~\ref{thm:SumsetScheduler} uses basic operations between sets of numbers, such as the sumset operation (see Section~\ref{section:preliminaries}) as basic primitives for its computation, and ultimately relies on fast polynomial multiplication for its speedup. It should be noted that Theorem~\ref{thm:SumsetScheduler} includes the $\tilde{O}(P)$ result of Theorem~\ref{thm:SubsetSumAlg} for $1|d_j=d|\sum p_jU_j$ as a special case where $D_{\#} = 1$ or $D=d$.

\subsection{Roadmap}

The paper is organized as follows. In Section~\ref{section:preliminaries} we discuss all the basic primitives that are used by our algorithms, including some basic properties that are essential for the algorithms. We then present our second algorithm in Section~\ref{section:SumsetScheduler}, followed by our first algorithm in Section~\ref{section:ConvScheduler}. Section~\ref{section:SkewedConv} describes our fast algorithm for the skewed version of $(\max,\min)$-convolution, and is the main technical part of the paper. Finally, we conclude with some remarks and open problems in Section~\ref{section:discussion}.

\section{Preliminaries}
\label{section:preliminaries}%

In the following we discuss the basic primitives and binary operators between sets/vectors of integers that will be used in our algorithms. In general, we will use the letters $X$ and $Y$ to denote sets of non-negative integers (where order is irrelevant), and the letters $A$ and $B$ to denote vectors of non-negative integers.

\paragraph*{Sumsets}

The most basic operation used in our algorithms is computing the sumset of two sets of
non-negative integers:
\begin{definition}[Sumsets]
    \label{def:sumset}%
    Given two sets of non-negative integers $X_1$ and $X_2$, the \emph{sumset} of $X_1$ and~$X_2$, denoted $X_1 \oplus X_2$, is defined by
    \[
    X_1 \oplus X_2 = \{ x_1+x_2 : x_1 \in X_1, x_2 \in X_2\}.
\]
\end{definition}

Clearly, the sumset $X_1 \oplus X_2$ can be computed in $O(|X_1| \cdot |X_2|)$ time. However, in certain cases we can do better using fast polynomial multiplication. Consider the two polynomials $p_1[\alpha] = \sum_{x \in X_1} \alpha^x$ and $p_2[\beta] = \sum_{x \in X_2} \beta^x$. Then the exponents of all terms in $p_1 \cdot p_2$ with non-zero coefficients correspond to elements in the sumset $X_1 \oplus X_2$. Since multiplying two polynomials of maximum degree $d$ can be done in $O(d \log d)$ time~\cite{CormenLRC2009}, we have the following:
\begin{lemma}
    \label{lem:sumset}%
    Given two sets of non-negative integers $X_1,X_2 \subseteq \{0,\ldots,P\}$, one can compute the sumset $X_1 \oplus X_2$ in $O(P \log P)$ time.
\end{lemma}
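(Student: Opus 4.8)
The plan is to reduce the computation of the sumset to a single polynomial multiplication, exactly along the lines sketched just before the statement. First I would encode the two input sets as indicator polynomials: set $p_1(\alpha) = \sum_{x \in X_1} \alpha^x$ and $p_2(\alpha) = \sum_{x \in X_2} \alpha^x$. Since $X_1, X_2 \subseteq \{0,\ldots,P\}$, both polynomials have degree at most $P$, and their coefficient vectors (of length $P+1$) can be written down in $O(P)$ time.

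Next I would form the product $q(\alpha) = p_1(\alpha)\cdot p_2(\alpha)$. Expanding, the coefficient of $\alpha^k$ in $q$ equals $|\{(x_1,x_2) \in X_1 \times X_2 : x_1 + x_2 = k\}|$, a non-negative integer that is strictly positive if and only if $k \in X_1 \oplus X_2$. Hence $X_1 \oplus X_2$ is precisely the set of exponents occurring with a non-zero coefficient in $q$. The polynomial $q$ has degree at most $2P$, so by the $O(d\log d)$-time polynomial multiplication of~\cite{CormenLRC2009} (applied with $d = 2P$) it can be computed in $O(P\log P)$ time. Finally, scanning the $O(P)$ coefficients of $q$ and collecting those that are non-zero recovers $X_1 \oplus X_2$ in $O(P)$ additional time, for a total of $O(P\log P)$.

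The only point that warrants a remark is the magnitude of the numbers involved: the coefficients of $q$ can be as large as $\min(|X_1|,|X_2|) \le P+1$, but such numbers fit in $O(\log P)$ bits and hence in a constant number of machine words, so all arithmetic performed during the multiplication and the final scan stays within the claimed bound; alternatively, one may carry out the multiplication over a prime field of size $\Theta(P)$, which is enough to detect which coefficients are non-zero. There is no genuine obstacle here — the lemma is essentially a restatement of the standard FFT-based Boolean-convolution trick — so the only care required is in bookkeeping the degrees and the $O(P)$ cost of building and reading the coefficient vectors.
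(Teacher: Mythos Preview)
Your proposal is correct and follows essentially the same approach as the paper: encode each set as an indicator polynomial, multiply via FFT in $O(P\log P)$ time, and read off the non-zero coefficients to recover the sumset. The paper gives exactly this argument (more tersely) in the paragraph preceding the lemma, so there is nothing to add.
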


\paragraph*{Set of all Subset Sums}

Given set of non-negative integers $X$, we will frequently be using the set of all sums generated by subsets of $X$:
\begin{definition}[Subset Sums]
    \label{def:SubsetSumSet}%
    For a given set of non-negative integers $X$, define the set of all subset sums $\SSS(X)$
    as the set of integers given by
    \[
    \SSS(X)= \Big\{\sum_{x \in Y} x : Y \subseteq X\Big\}.
\]
    Here, we always assume that $0 \in \SSS(X)$ (as it is the sum of the empty set).
\end{definition}

We can use Lemma~\ref{lem:sumset} above to compute $\SSS(X)$ from $X$ rather efficiently: First, split $X$ into two sets $X_1$ and $X_2$ of roughly equal size. Then recursively compute $\SSS(X_1)$ and $\SSS(X_2)$. Finally, compute $\SSS(X)=\SSS(X_1) \oplus \SSS(X_2)$ via Lemma~\ref{lem:sumset}. The entire algorithm runs in $\tilde{O}(\sum_{x \in X} x)$ time.

\begin{lemma}[\cite{KoiliarisX17}]
    \label{lem:SubsetSumSet}%
    Given a set of non-negative integers $X$, with $P= \sum_{x \in X} x$,  one can compute $\SSS(X)$ in $\tilde{O}(P)$ time.
\end{lemma}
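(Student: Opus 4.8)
The plan is to carry out precisely the divide-and-conquer strategy sketched just above the statement. If $|X| \le 1$ we return the answer directly ($\SSS(\emptyset) = \{0\}$ and $\SSS(\{x\}) = \{0,x\}$). Otherwise we partition $X$ into two disjoint sets $X_1, X_2$ of sizes $\lfloor |X|/2 \rfloor$ and $\lceil |X|/2 \rceil$, recursively compute $\SSS(X_1)$ and $\SSS(X_2)$, and output $\SSS(X_1) \oplus \SSS(X_2)$, which we compute via Lemma~\ref{lem:sumset} and store as a sorted list (or bit-vector) of integers from $\{0,\dots,P\}$.

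For correctness it suffices to establish the identity $\SSS(X_1 \sqcup X_2) = \SSS(X_1) \oplus \SSS(X_2)$ for disjoint sets $X_1, X_2$, after which an easy induction on $|X|$ finishes the job. The inclusion ``$\subseteq$'' holds because every $Y \subseteq X_1 \sqcup X_2$ splits uniquely as $Y = (Y \cap X_1) \sqcup (Y \cap X_2)$, so $\sum_{y\in Y} y = \sum_{y \in Y\cap X_1} y + \sum_{y \in Y \cap X_2} y$ lies in $\SSS(X_1) \oplus \SSS(X_2)$; the inclusion ``$\supseteq$'' holds because any $Y_1 \subseteq X_1$ and $Y_2 \subseteq X_2$ realizing sums $s_1, s_2$ are disjoint, so $Y_1 \cup Y_2 \subseteq X$ realizes the sum $s_1 + s_2$. (That $0 \in \SSS(X)$ is preserved, since $0$ lies in each $\SSS(X_i)$.)

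For the running time, associate with each node of the recursion tree the value $P' := \sum_{x \in X'} x$, where $X'$ is the set processed at that node; note $P' \le P$. Since every element of $\SSS(X_i)$ lies in $\{0,\dots,P_i\} \subseteq \{0,\dots,P\}$, Lemma~\ref{lem:sumset} computes the merge at a node whose two children carry values $P'_1$ and $P'_2$ in $O((P'_1 + P'_2)\log P)$ time. Crucially, the sets processed at any fixed depth of the recursion tree partition $X$, so the corresponding values $P'$ sum to exactly $P$; hence the total work at each level is $O(P \log P)$, and since the recursion has depth $O(\log |X|) = O(\log P)$ the overall running time is $O(P \log^2 P) = \tilde O(P)$. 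The only step requiring care is exactly this bookkeeping: one must charge each merge to the sums of the sets actually involved and exploit the within-level disjointness, rather than bounding every merge by $O(P \log P)$ individually (which would lose a factor equal to the width of the tree); everything else is a routine induction.
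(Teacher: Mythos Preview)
Your proof is correct and follows exactly the divide-and-conquer strategy the paper sketches just before the lemma: split $X$ into two halves, recurse, and merge via a sumset computed with Lemma~\ref{lem:sumset}. Your explicit level-by-level accounting (the sets at each depth partition $X$, so each level costs $O(P\log P)$ and there are $O(\log P)$ levels) is precisely the intended analysis, just spelled out in more detail than the paper does.
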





\paragraph*{Convolutions}

Given two vectors $A = (A[i])^n_{i=0}$, $B = (B[j])^n_{j=0}$, the $(\circ,\bullet)$-Convolution problem for binary operators $\circ$ and $\bullet$ is to compute a vector $C = (C[k])^{2n}_{k=0}$ with
\[
C[k] =\bigcirc_{i+j=k} A[i] \bullet B[j].
\]
A prominent example of a convolution problem is $(\min,+)$-Convolution discussed above; another similarly prominent example is $(\max,\min)$-Convolution which can be solved in  $\tilde{O}(n^{3/2})$ time~\cite{Kosaraju1989}. For our purposes, it is convenient to look at a \emph{skewed} variant of this problem:
\begin{definition}[Skewed Convolution]
    \label{def:SkewedConv}%
    Given two vectors $A = (A[i])^n_{i=0}$, $B = (B[j])^n_{j=0}$, we define the $(\max,\min)$-Skewed-Convolution problem to be the problem of computing the vector $C= (C[k])^{2n}_{k=0}$ where the $k$th entry in $C$ equals
    \[
    C[k] =\max_{i+j=k} \min \{A[i],B[j]+k\}
\]
    for each $k \in \{0,\ldots,2n\}$.
\end{definition}

The main technical result of this paper is an algorithm for $(\max,\min)$-Skewed-Convolution that is significantly faster than the naive $O(n^2)$ time algorithm.
\begin{theorem}
    \label{thm:SkewedConv}%
    The $(\max,\min)$-Skewed-Convolution problem for vectors of length~$n$ can be solved in $\tilde{O}(n^{7/4})$ time.
\end{theorem}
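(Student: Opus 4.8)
The plan is to reduce $(\max,\min)$-Skewed-Convolution to a family of Boolean convolutions, each computable in near-linear time by Lemma~\ref{lem:sumset}, following the template of the $\tilde{O}(n^{3/2})$-time algorithm for ordinary $(\max,\min)$-convolution --- threshold the attained value, run one Boolean (sumset) convolution per threshold, and finish each output entry by a short brute-force scan --- but with extra machinery to cope with the ``skew''.

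The first step is to pass to level sets: for a value $t$ one has $C[k] \ge t$ exactly when there are indices $i+j=k$ with $A[i] \ge t$ and $B[j] \ge t-k$, and the distinguishing feature of the skewed problem is that the threshold imposed on $B$ depends on the output position $k$. To make this tractable I would split $C$ into two parts according to whether the minimum in $\min\{A[i], B[j]+k\}$ is attained by $A[i]$ or by $B[j]+k$. Substituting $j=k-i$ and writing $\hat{B}[j] := B[j]+j$ turns $B[j]+k$ into $\hat{B}[j]+i$, so the comparison of $A[i]$ with $B[j]+k$ becomes the skew-free comparison of $A[i]-i$ with $\hat{B}[j]$; moreover the second part can be rewritten as $C^{(2)}[k] = k + \max\{B[j] : i+j=k,\ \hat{B}[j] < A[i]-i\}$, which is skew-free as well. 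Hence the whole problem reduces to computing, for all $k$, expressions of the form $\max\{A[i] : i+j=k,\ \varphi[i] \le \psi[j]\}$ where the quantity being maximized, $A[i]$, is now \emph{different} from the quantity being compared, $\varphi[i] = A[i]-i$; this decoupling is precisely what makes the skewed variant harder than the ordinary one, where the maximized quantity and the compared quantity coincide.

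For this reduced problem I would use two separate families of thresholds: one family of roughly $\sqrt{n}$ thresholds on the value $A[i]$, and one family of thresholds on the comparison quantities $\varphi[i]$ and $\psi[j]$. For each pair consisting of a value-threshold $\alpha$ and a comparison-threshold $\beta$, a single Boolean sumset convolution of $\{i : A[i] \ge \alpha,\ \varphi[i] < \beta\}$ with $\{j : \psi[j] \ge \beta\}$ --- computed in $\tilde{O}(n)$ time by Lemma~\ref{lem:sumset} --- certifies, for every position $k$ it produces, that the answer at $k$ is at least $\alpha$. The only valid pairs $(i,j)$ that escape this test are those for which $\varphi[i]$ and $\psi[j]$ lie in a common inter-threshold band; I would recover these ``boundary'' pairs by direct enumeration, and once each answer has been localized to a value band, finish it by scanning the indices whose $A$-value lies in that band (treating high-multiplicity values by a few additional sumset convolutions). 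Choosing the sizes of the two threshold families to balance the number of sumset convolutions, the enumeration cost over boundary pairs, and the cost of the brute-force finishing then yields the claimed running time of $\tilde{O}(n^{7/4})$.

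The step I expect to be the main obstacle is precisely the skew: because of the $+k$ term the natural Boolean-convolution test is only correct in one direction, so the crux is to prove that the uncaptured error is confined to ``near-boundary'' pairs and to bound the total cost of correcting them --- while simultaneously taming adversarial inputs, such as many indices sharing the same value, which would otherwise inflate both the boundary enumeration and the finishing step. It is this trade-off between the number of FFT-based convolutions and the size of the correction that is responsible for the exponent $7/4$ rather than the $3/2$ of the unskewed problem.
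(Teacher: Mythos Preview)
Your route is genuinely different from the paper's. The paper does not use two threshold families; it uses a scale recursion. Writing $C^\ell[k]=\max_{i+j=k}\min\{\lfloor A[i]/2^\ell\rfloor,\lfloor B[j]/2^\ell\rfloor+\lfloor k/2^\ell\rfloor\}$, at the top level $\ell\approx\log n$ the skew term vanishes and one is left with an ordinary $(\max,\min)$-convolution; the key observation is that $D^\ell:=2C^{\ell+1}$ satisfies $C^\ell[k]-2\le D^\ell[k]\le C^\ell[k]$, so at every level every answer is already localised to a width-$3$ value window. For each $k$, either at most $n^{\delta}$ indices $i$ have $A^\ell[i]$ in that window (brute force, $\tilde O(n^{1+\delta})$ total), or many do, in which case a sparsification argument shows only $O(n^{1-\delta})$ distinct $A^\ell$-values survive, each dispatched by one $(\max,\min)$-convolution ($\tilde O(n^{5/2-\delta})$ total); $\delta=3/4$ balances to $\tilde O(n^{7/4})$. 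Both arguments split on which side attains the $\min$ and both bottleneck at per-value $(\max,\min)$-convolutions, but the recursion replaces your entire comparison-threshold-plus-boundary-enumeration machinery by the single inequality $|C^\ell-2C^{\ell+1}|\le 2$.

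Your plan is plausible but, as written, has real gaps. First, ``a few additional sumset convolutions'' does not handle a high-multiplicity value $v$ of $A$: deciding for every $k$ whether some $i$ with $A[i]=v$ satisfies $v-i\le\psi[k-i]$ is not a Boolean sumset but a $(\max,\min)$-convolution, costing $\tilde O(n^{3/2})$ each; with up to $s$ such values this forces $s\approx n^{1/4}$ (not the $\sqrt n$ you state) to stay within $n^{7/4}$. Second, and more seriously, your $n^2/t$ bound for boundary enumeration presumes balanced quantile bands on $\varphi,\psi$, but $\varphi[i]=A[i]-i$ can have high multiplicity even when $A$ does not (and likewise for $\psi$), so a single comparison-band may contain $\Theta(n)$ indices on each side and cost $\Theta(n^2)$ to enumerate; you flag this danger but do not resolve it. The paper's recursion sidesteps both issues at once: the width-$3$ localisation is inherited from the coarser level, so no comparison thresholds and no boundary enumeration are ever needed.
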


\section{Algorithm via Sumsets and Subset Sums}
\label{section:SumsetScheduler}%


In the following section, we provide a proof of Theorem~\ref{thm:SumsetScheduler} by presenting an algorithm for $1||\sum p_jU_j$ running in $\tilde{O}(\min\{P \cdot D_{\#}, P + D\})$ time. Recall that $J=\{1,\ldots,n\}$ denotes our input set of jobs, and $p_j$ and $d_j$ respectively denote the processing time and due date of job $j \in \{1,\ldots,n\}$. Our goal is to determine the minimum total processing time of tardy jobs in any schedule for $J$. Throughout the section we let $d^{(1)}<\cdots<d^{(D_\#)}$ denote the $D_\# \leq n$ \emph{different} due dates of the jobs in $J$.

A key observation for the $1||\sum p_jU_j$ problem, used already by Lawler and Moore, is that any instance of the problem always has an optimal schedule of a specific type, namely an Earliest Due Date schedule.
An Earliest Due Date (EDD) schedule is a schedule $\pi: J \to \{1,\ldots,n\}$ such that
\begin{itemize}
    \item any early job precedes all late jobs in $\pi$, and
    \item any early job precedes all early jobs with later due dates.
\end{itemize}
In other words, in an EDD schedule all early jobs are scheduled before all tardy jobs, and all early jobs are scheduled in non-decreasing order of due dates.
\begin{lemma}[\cite{LawlerMoore}]
    \label{lem:EDD}%
    Any $1||\sum p_jU_j$ instance has an optimal schedule which is EDD.
\end{lemma}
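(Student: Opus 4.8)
The plan is to show that any optimal schedule can be transformed into an EDD schedule without changing the objective value, using a standard exchange argument. Fix an optimal schedule $\sigma$ for the instance $J$, and let $E \subseteq J$ be the set of jobs that are early in $\sigma$; the objective value of $\sigma$ is $\sum_{j \in J \setminus E} p_j = P - \sum_{j \in E} p_j$. The key observation is that the objective depends only on which set $E$ of jobs is early, so it suffices to exhibit \emph{some} schedule in which exactly the jobs of $E$ are early and which has the EDD form. The natural candidate is the schedule $\pi$ that first runs the jobs of $E$ in non-decreasing order of due dates (breaking ties arbitrarily, consistently with $\sigma$), and then runs the jobs of $J \setminus E$ in any order afterwards.

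First I would argue that every job of $E$ remains early in $\pi$. The jobs of $J \setminus E$ are scheduled after all of $E$, so they do not affect the completion times of jobs in $E$; hence it is enough to look at $E$ scheduled in EDD order among themselves. Here I would use the classical fact that if a set of jobs can all be completed on time in \emph{some} order, then it can all be completed on time in EDD order: enumerate $E = \{j_1, \ldots, j_k\}$ with $d_{j_1} \le \cdots \le d_{j_k}$, and for each $i$ note that the completion time of $j_i$ in $\pi$ is $\sum_{\ell \le i} p_{j_\ell}$. Taking the job $j^*$ among $j_1, \ldots, j_i$ that appears last in the original schedule $\sigma$, its completion time in $\sigma$ is at least $\sum_{\ell \le i} p_{j_\ell}$ (it is preceded in $\sigma$ by all of $j_1, \ldots, j_i$, since those are $i$ distinct jobs), and since $j^*$ is early in $\sigma$ we get $\sum_{\ell \le i} p_{j_\ell} \le C_{j^*}^\sigma \le d_{j^*} \le d_{j_i}$, where the last inequality holds because $j^* \in \{j_1, \ldots, j_i\}$ has due date at most $d_{j_i}$. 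Thus $j_i$ is early in $\pi$, for every $i$.

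It follows that in $\pi$ the set of early jobs contains $E$, so the total processing time of tardy jobs in $\pi$ is at most $P - \sum_{j \in E} p_j$, i.e.\ at most the objective value of $\sigma$; by optimality of $\sigma$ it is exactly that value, and $\pi$ is optimal. By construction $\pi$ schedules all early jobs before all tardy jobs and schedules the early jobs in non-decreasing due-date order, so $\pi$ is EDD. I do not anticipate a serious obstacle here — the only subtlety is the exchange step showing a feasible set stays feasible under EDD reordering, which is the standard Jackson-type argument sketched above; care is just needed to phrase it at the level of sets of early jobs rather than swapping adjacent jobs one at a time.
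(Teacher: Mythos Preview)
Your proof is correct; the exchange argument via the classical EDD feasibility fact (if a set of jobs can all be early in some order, then also in due-date order) is exactly the standard way to establish this lemma. Note that the paper does not actually supply its own proof of Lemma~\ref{lem:EDD}---it simply cites Lawler and Moore---so there is no in-paper argument to compare against; your write-up would serve perfectly well as a self-contained proof, with the only minor point being that the claim ``$\pi$ is EDD'' at the end tacitly uses optimality of $\sigma$ (and $p_j\ge 1$) to ensure the early set of $\pi$ is exactly $E$ rather than a superset.
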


The $D_\#$-many due dates in our instance partition the input set of job $J$ in a natural manner: Define $J_i=\{j : d_j = d^{(i)}\}$ for each $i \in \{1,\ldots,D_{\#}\}$. Furthermore, let $X_i = \{ p_j : j \in J_i \}$ the processing-times of job in $J_i$. According to Lemma~\ref{lem:EDD} above, we can restrict our attention to EDD schedules. Constructing such a schedule corresponds to choosing a subset $E_i \subseteq J_i$ for each due date $d^{(i)}$ such that $\sum_{j \in E_\ell, \ell \leq i} p_j \leq d^{(i)}$ holds for each $i \in \{1,\ldots,D_\#\}$. Moreover, the optimal EDD schedule maximizes the total sum of processing times in all selected $E_i$'s.

Our algorithm is given in Algorithm~\ref{alg:sumset-scheduler}. It successively computes sets $S_1,\ldots,S_{D_{\#}}$, where set $S_i$ corresponds to the set of jobs $J_1 \cup \cdots \cup J_i$. In particular, $S_i$ includes the total processing-time of any possible set-family of early jobs $\{E_1, \ldots, E_i\}$. Thus, each $x \in S_i$ corresponds to the total processing time of early jobs in a subset of $J_1 \cup \cdots \cup J_i$. The maximum value $x \in S_{D_{\#}}$ therefore corresponds to the maximum total processing time of early jobs in any schedule for $J$. Thus, the algorithm terminates by returning the optimal total weight of tardy jobs $P-x$.

\begin{algorithm}[h]
    \caption{\textsc{SumsetScheduler($J$)}} \label{alg:sumset-scheduler}
    \begin{algorithmic}[1]
        \State Let $d^{(1)} < \ldots < d^{(D_{\#})}$ denote the different due dates of jobs in $J$.
        \State Compute $X_i = \{p_j : d_j = d^{(i)}\}$ for each $i \in \{1,\ldots,D_{\#}\}$.
        \State Compute $\SSS(X_1), \ldots, \SSS(X_{D_{\#}})$.
        \State Let $S_0=\emptyset$.
        \For{$i=1,\ldots,D_\#$}
        \Statex\textbf{\,\,--\,\,\,}Compute $S_i=S_{i-1} \oplus \SSS(X_i)$.
        \Statex\textbf{\,\,--\,\,\,}Remove any $x \in S_i$ with $x > d^{(i)}$.
        \EndFor
        \State Return $P-x$, where $x$ is the maximum value in $S_{D_{\#}}$.
    \end{algorithmic}
\end{algorithm}

Correctness of our algorithm follows immediately from the definitions of sumsets and subset sums, and from the fact that we prune out elements $x \in S_i$ with $x > d^{(i)}$ at each step of the algorithm. This is stated more formally in the lemma below.

\begin{lemma}
    Let $i\in\{1,\ldots,D_\#\}$, and let $S_i$ be the set of integers at the end of the second step of $5(i)$. Then $x \in S_i$ if and only if there are sets of jobs $E_1 \subseteq J_1, \ldots, E_i \subseteq J_i$ such that
    \begin{itemize}
        \item $\sum_{j \in \bigcup^i_{\ell=1} E_\ell} p_j =x$, and
        \item $\sum_{j \in E_\ell, \ell \leq i_0} p_j \leq d^{(i_0)}$ holds for each $i_0 \in \{1,\ldots,i\}$.
    \end{itemize}
\end{lemma}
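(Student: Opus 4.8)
The natural approach is induction on $i$. I would first fix the conventions that the statement implicitly relies on: take $S_0=\{0\}$ (the sum over the empty collection of jobs), and regard each $X_i$ as the multiset $\{p_j : j\in J_i\}$ indexed by the jobs of $J_i$, so that subsets $Z\subseteq X_i$ correspond to subsets $E_i\subseteq J_i$ with $\sum_{p\in Z}p=\sum_{j\in E_i}p_j$; in particular $\SSS(X_i)=\{\sum_{j\in E_i}p_j : E_i\subseteq J_i\}$. With these conventions the claim holds vacuously for $i=0$, and the case $i=1$ (where $S_1$ is just $\SSS(X_1)$ with the elements exceeding $d^{(1)}$ removed) is immediate; these serve as the base case.

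For the inductive step, assume the statement for $i-1$ and consider $S_i$, which is obtained from $S_{i-1}\oplus\SSS(X_i)$ by deleting every element larger than $d^{(i)}$. For the ``only if'' direction, let $x\in S_i$; then $x\le d^{(i)}$ and $x=y+z$ with $y\in S_{i-1}$ and $z\in\SSS(X_i)$. The induction hypothesis yields $E_1\subseteq J_1,\dots,E_{i-1}\subseteq J_{i-1}$ of total processing time $y$ satisfying all the constraints for $i_0\le i-1$; choosing $E_i\subseteq J_i$ with $\sum_{j\in E_i}p_j=z$, the combined family has total $y+z=x$, the constraints for $i_0\le i-1$ are inherited unchanged (they only involve $E_1,\dots,E_{i_0}$), and the constraint for $i_0=i$ is exactly $x\le d^{(i)}$, which holds because $x$ survived the pruning. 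For the converse, given feasible $E_1,\dots,E_i$, set $y:=\sum_{j\in\bigcup_{\ell\le i-1}E_\ell}p_j$ and $z:=\sum_{j\in E_i}p_j$; the truncated family $E_1,\dots,E_{i-1}$ satisfies all constraints for $i_0\le i-1$, so $y\in S_{i-1}$ by induction, while $z\in\SSS(X_i)$ by the correspondence above, hence $x=y+z\in S_{i-1}\oplus\SSS(X_i)$; since the $i_0=i$ constraint gives $x\le d^{(i)}$, the value $x$ is not pruned, so $x\in S_i$.

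This is essentially routine; the only points that need care are bookkeeping ones. First, one must make sure the correspondence between $\SSS(X_i)$ and sub-collections of $J_i$ is sound even when several jobs of $J_i$ share a processing time (this is why $X_i$ must be read as a multiset). Second, it is worth recording explicitly why the per-step pruning is harmless for later iterations: since processing times are non-negative, the prefix sums $\sum_{j\in E_\ell,\,\ell\le i_0}p_j$ are non-decreasing in $i_0$, so any value $x>d^{(i)}$ discarded in iteration $i$ could only extend to prefixes that still violate the $i$-th constraint, and hence is never part of a schedule feasible up to some $d^{(i_0)}$ with $i_0\ge i$. Applying the lemma with $i=D_\#$ and reading off $\max S_{D_\#}$, together with Lemma~\ref{lem:EDD}, then yields the correctness of Algorithm~\ref{alg:sumset-scheduler}.
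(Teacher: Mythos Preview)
Your proof is correct and follows essentially the same inductive approach as the paper's own proof. In fact, your argument is more complete: the paper only spells out the ``only if'' direction of the inductive step, whereas you supply both directions and also address the multiset reading of $X_i$ and the convention $S_0=\{0\}$ (the paper writes $S_0=\emptyset$, which taken literally would make every sumset empty).
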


\begin{proof}
    The proof is by induction on $i$. For $i=1$, note that $S_1 = \SSS(X_1) \setminus \{ x : x > d^{(1)}\}$ at the end of step $5(1)$. Since $\SSS(X_1)$ includes the total processing time of any subset of jobs in $J_1$, the first condition of the lemma holds. Since $\{ x : x > d^{(1)}\}$ includes all integers violating the second condition of the lemma, the second condition holds.

    Let $i > 1$, and assume the lemma holds for $i-1$. Consider some $x \in S_i$ at the end of the second step of $5(i)$. Then by Definition~\ref{def:sumset}, we have $x=x_1+x_2$ for some $x_1 \in S_{i-1}$ and $x_2 \in \SSS(X_i)$ due the first step of $5(i)$. By definition of $\SSS(X_i)$, there is some $E_i \subseteq J_i$ with total processing time $x_2$. By our inductive hypothesis there is $E_1 \subseteq J_1,\ldots,E_{i-1} \subseteq J_{i-1}$ such that $\sum_{j \in \bigcup^i_{\ell=1} E_\ell} p_j =x_1$, and $\sum_{j \in E_\ell, \ell \leq i_0} p_j \leq d^{(i_0)}$ holds for each $i_0 \in \{1,\ldots,i-1\}$. Furthermore, by the second step of $5(i)$, we know that $\sum_{j \in E_\ell, \ell \leq i} p_j = x \leq d^{(i)}$. Thus, $E_1,\ldots,E_i$ satisfy both conditions of the lemma.
\end{proof}

Let us next analyze the time complexity of the \textsc{SumsetScheduler} algorithm. Steps~1 and~2 can be both performed in $\tilde{O}(n)=\tilde{O}(P)$ time. Next observe that step 3 can be done in total $\tilde{O}(P)$ time using Lemma~\ref{lem:SubsetSumSet}, as $X_2,\ldots,X_{D_{\#}}$ is a partition of the set of all processing times of $J$, and these all sum up to $P$. Next, according to Lemma~\ref{lem:sumset}, each sumset operation at step 5 can be done in time proportional to the largest element in the two sets, which is always at most $P$. Thus, since we perform at most $D_{\#}$ sumset operations, the merging step requires $\tilde{O}(D_{\#} \cdot P)$ time, which gives us the total running time of the algorithm above.

Another way to analyze the running time of \textsc{SumsetScheduler} is to observe that the maximum element participating in the $i$th sumset is bounded by $d^{(i+1)}$. It follows that we can write the running time of the merging step as $\tilde{O}(D)$, where $D = \sum^{D_{\#}}_{i=1} d^{(i)}$. Thus, we have just shown that $1||\sum p_jU_j$ can be solved in $\tilde{O}(\min\{D_{\#} \cdot P, D + P\})$ time, completing the proof of Theorem~\ref{thm:SumsetScheduler}.


\section{Algorithm via Fast Skewed Convolutions}
\label{section:ConvScheduler}%

We next present our $\tilde{O}(P^{7/4})$ time algorithm for $1||\sum p_jU_j$, providing a proof of Theorem~\ref{thm:ConvScheduler}. As in the previous section, we let $d^{(1)}<\cdots<d^{(D_\#)}$ denote the $D_\# \leq n$ different due dates of the input jobs $J$, and $J_i=\{j : d_j = d^{(i)}\}$ and $X_i = \{ p_j : j \in J_i \}$ as in Section~\ref{section:SumsetScheduler} for each $i \in \{1,\ldots,D_{\#}\}$.

For a consecutive subset of indices $I=\{i_0,i_0+1,\ldots,i_1\}$, with $i_0,\ldots,i_1 \in \{1,\ldots,D_\#\}$, we define a vector $M(I)$, where $M(I)[x]$ equals the latest (that is, maximum) time point~$x_0$ for which there is a subset of the jobs in $\bigcup_{i \in I} J_i$ with total processing time equal to $x$ that can all be scheduled early in an EDD schedule starting at $x_0$. If no such subset of jobs exists, we define $M(I)[x]=+\infty$.

For a singleton set $I=\{i\}$, the vector $M(I)$ is easy to compute once we have computed the set $\SSS(X_i)$:
\begin{equation}
    \label{eqn:singltons}%
    M(\{i\})[x]=
    \begin{cases}
        d^{(i)} - x & \text{if $x \in \SSS(X_i)$ and $x \leq d^{(i)}$},\\
        +\infty & \text{otherwise}.
    \end{cases}
\end{equation}
For larger sets of indices, we have the following lemma.
\begin{lemma}
    \label{lem:skewed2scheduling}%
    Let $I_1=\{i_0,i_0+1,\ldots,i_1\}$ and $I_2=\{i_1+1,i_1+2,\ldots,i_2\}$ be any two sets of consecutive indices with $i_0,\ldots,i_1, \ldots, i_2  \in \{1,\ldots,D_\#\}$. Then for any value $x$ we have:
    \[
    M(I_1 \cup I_2)[x]=\max_{x_1+x_2=x} \min \{M(I_1)[x_1], M(I_2)[x_2] - x_1\}.
\]
\end{lemma}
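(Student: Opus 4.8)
The plan is to prove the two inequalities ``$\le$'' and ``$\ge$'' separately, using throughout two facts. First, the structural fact that, since $d^{(i)} < d^{(i')}$ for every $i \in I_1$ and $i' \in I_2$, in \emph{any} EDD schedule every selected job from $\bigcup_{i \in I_1} J_i$ is processed before every selected job from $\bigcup_{i \in I_2} J_i$. Second, the trivial monotonicity observation that a set of jobs that is schedulable early (in EDD order) from start time $u$ is also schedulable early from any start time $u' \in [0,u]$, because lowering the start time only lowers all completion times.

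For ``$M(I_1\cup I_2)[x] \le \text{RHS}$'': assume $M(I_1\cup I_2)[x] = t < \infty$, witnessed by a set $F \subseteq \bigcup_{i \in I_1 \cup I_2} J_i$ with $\sum_{j\in F} p_j = x$ scheduled early in an EDD schedule starting at $t$. Write $F_1 = F \cap \bigcup_{i \in I_1} J_i$, $F_2 = F \cap \bigcup_{i \in I_2} J_i$ and $x_1 = \sum_{j \in F_1} p_j$, $x_2 = \sum_{j \in F_2} p_j$, so $x_1 + x_2 = x$. By the structural fact the witnessing schedule runs all of $F_1$ first, finishing exactly at time $t + x_1$, and then all of $F_2$. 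Restricting this schedule to $F_1$ shows $F_1$ is schedulable early from $t$, hence $M(I_1)[x_1] \ge t$; restricting it to $F_2$ and translating the time origin by $t + x_1$ shows $F_2$ is schedulable early from $t + x_1$, hence $M(I_2)[x_2] \ge t + x_1$, i.e. $M(I_2)[x_2] - x_1 \ge t$. Therefore $\min\{M(I_1)[x_1], M(I_2)[x_2] - x_1\} \ge t$, and maximizing over splits $x_1+x_2=x$ proves the inequality.

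For ``$M(I_1\cup I_2)[x] \ge \text{RHS}$'': fix any split $x_1 + x_2 = x$ with $t_1 := M(I_1)[x_1] < \infty$ and $t_2 := M(I_2)[x_2] < \infty$, witnessed by sets $F_1,F_2$ as above, and set $s := \min\{t_1, t_2 - x_1\}$. Since every $M$-value is a nonnegative integer or $+\infty$, we trivially have $M(I_1\cup I_2)[x] \ge s$ when $s < 0$, so assume $s \ge 0$. Schedule $F := F_1 \cup F_2$ (which has total processing time $x$) in EDD order starting at $s$: it runs $F_1$ first, finishing at $s + x_1 \le t_2$, and then $F_2$. Since $s \le t_1$, monotonicity makes all of $F_1$ early; since $s + x_1 \le t_2$, monotonicity makes all of $F_2$ early. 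Hence $F$ witnesses $M(I_1\cup I_2)[x] \ge s = \min\{M(I_1)[x_1], M(I_2)[x_2] - x_1\}$, and maximizing over splits finishes the argument.

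What remains --- and what I expect to be the main obstacle --- is reconciling the two sides on splits where an $M$-value equals $+\infty$, i.e. where a prescribed partial total is not realizable. Here one has to check, from the definitions and the conventions $\infty - c = \infty$ and $\min\{\infty,z\}=z$, that both sides are $+\infty$ simultaneously, namely exactly when no subset of $\bigcup_{i\in I_1\cup I_2} J_i$ of total processing time $x$ admits an early EDD schedule; this is a short but slightly delicate case analysis that leans on the special structure of the subset-sum sets $\SSS(X_i)$, and is the one point where a naive reading of the statement needs care. The combinatorial heart, however, is exactly the concatenate-and-translate argument above, and it already explains the shape of the recursion --- in particular why the skew term is $-x_1$ rather than, say, $-x$: the left block occupies a time interval of length precisely $x_1$, so the right block is forced to start $x_1$ time units later.
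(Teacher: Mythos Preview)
Your proof is correct and follows essentially the same strategy as the paper: split an optimal witness for $M(I_1\cup I_2)[x]$ to bound the left-hand side from above, and concatenate witnesses for $M(I_1)[x_1]$ and $M(I_2)[x_2]$ (shifted by $x_1$) to bound it from below. You are simply more explicit than the paper about the monotonicity of start times, the possibility $s<0$, and the $+\infty$ convention; the paper glosses over these, but the combinatorial core is identical.
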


\begin{proof}
    Let $I = I_1 \cup I_2$. Then $M(I)[x]$ is the latest time point after which a subset
    of jobs $J^* \subseteq \bigcup_{i \in I} J_i$ of total processing time $x$ can be
    scheduled early in an EDD schedule. Let $x_1$ and $x_2$ be the total processing times
    of jobs in $J^*_1 = J^*  \cap \left( \bigcup_{i \in I_1} J_i \right)$ and $J^*_2 = J^*
    \cap \left( \bigcup_{i \in I_2} J_i \right)$, respectively. Then $x= x_1 + x_2$.
    Clearly, $M(I)[x] \leq M(I_1)[x_1]$, since we have to start scheduling the jobs in
    $J^*_1$ at time $M(I_1)[x_1]$ by latest. Similarly, it holds that
    ${M(I)[x] \leq M(I_2)[x_2]-x_1}$ since the jobs in $J^*_2$ are scheduled at latest at
    $M(I_2)[x_2]$
    and the jobs in $J^*_1$ have to be processed before that time point in an EDD
    schedule.
    In combination, we have shown that LHS $\leq$ RHS in the equation of the lemma.

    To prove that LHS $\geq$ RHS, we construct a feasible schedule for jobs in $\bigcup_{i \in I} J_i$ starting at RHS. Let $x_1$ and $x_2$ be the two values with $x_1 + x_2 = x$ that maximize RHS. Then there is a schedule which schedules some jobs $J^*_1 \subseteq  \bigcup_{i \in I_1} J_i$ of total processing time $x_1$ beginning at time $\min \{M(I_1)[x_1], M(I_2)[x_2] - x_1\} \leq M(I_1)[x_1]$, followed by a another subset of jobs $J^*_2  \subseteq \bigcup_{i \in I_2} J_i$ of total processing time $x_2$ starting at time $\min \{M(I_1)[x_1], M(I_2)[x_2] - x_1\}+x_1 \leq M(I_2)[x_2]$. This is a feasible schedule starting at time RHS for a subset of jobs in $\bigcup_{i \in I} J_i$ which has total processing time~$x$.
\end{proof}

Note that the equation given in Lemma~\ref{lem:skewed2scheduling} is close but not precisely the equation defined in Definition~\ref{def:SkewedConv} for the $(\min,\max)$-Skewed-Convolution problem. Nevertheless, the next lemma shows that we can easily translate between these two concepts.

\begin{lemma}
    Let $A$ and $B$ be two integer vectors of $P$ entries each. Given an algorithm for computing the $(\max,\min)$-Skewed-Convolution of $A$ and $B$ in $T(P)$ time, we can compute in $T(P)+O(P)$ time the vector $C = A \otimes B$ defined by
\[
    C[x] =\max_{x_1+x_2=x} \min \{A[x_1],B[x_2]-x_1\}.
\]
\end{lemma}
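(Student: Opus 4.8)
The plan is to reduce the operator $\otimes$ to a single invocation of the $(\max,\min)$-Skewed-Convolution algorithm of Theorem~\ref{thm:SkewedConv}, with only $O(P)$-time pre- and post-processing around it.

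First I would expose the skewed-convolution structure hidden inside $C$. Because the summands are constrained by $x_1+x_2=x$, we have $-x_1=x_2-x$, so $B[x_2]-x_1=(B[x_2]+x_2)-x$ and hence $C[x]=\max_{x_1+x_2=x}\min\{A[x_1],\,(B[x_2]+x_2)-x\}$. This is almost the skewed convolution of $A$ and the vector $j\mapsto B[j]+j$ --- the one discrepancy being that Definition~\ref{def:SkewedConv} \emph{adds} the output index $k$ to the second argument, whereas here the output index $x$ is \emph{subtracted}. This sign flip is the crux of the reduction, and the fix is to reverse both input vectors: read through reversed indices, a ``$+k$'' in a convolution becomes a ``$-$(something)'', at the price of an additive constant that we pre-absorb into the second vector.

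Concretely, let $n$ denote the common largest index of $A$ and $B$, and define $A'[a]:=A[n-a]$ and $B'[b]:=B[n-b]+(n-b)-2n$ for $a,b\in\{0,\dots,n\}$; let $D=(D[k])_{k=0}^{2n}$ be the $(\max,\min)$-Skewed-Convolution of $A'$ and $B'$. Substituting $x_1=n-a$ and $x_2=n-b$, so that $a+b=k$ becomes $x_1+x_2=2n-k$, a one-line calculation gives $B'[b]+k=B[x_2]-x_1$ and therefore $D[2n-x]=\max_{x_1+x_2=x}\min\{A[x_1],B[x_2]-x_1\}=C[x]$ for every $x\in\{0,\dots,2n\}$. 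Thus a single call to the algorithm of Theorem~\ref{thm:SkewedConv}, followed by reversing the output vector, yields all of $C$. Two routine technical points remain, each handled in $O(P)$ time. First, the algorithm is stated for non-negative inputs, while $B'$ may be as small as $-2n$; I add the constant $2n$ to both $A'$ and $B'$ before the call, which shifts every output entry by exactly $2n$ and is undone at the end. Second, in the intended application $A$ and $B$ may carry $+\infty$-entries (see Lemma~\ref{lem:skewed2scheduling}), under the convention $\infty-c=\infty$, $\min\{\infty,v\}=v$, $\min\{\infty,\infty\}=\infty$; I replace each $\infty$ by a sentinel $s=\Theta(P)$ strictly larger than any finite value that can arise. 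One checks that on a pair where exactly one argument is infinite the $\min$ still returns the correct finite value, whereas a pair where both are infinite produces an output entry provably exceeding every legitimate finite output, so a final thresholding pass restores the $\infty$'s. Since the reversals, the shift, the sentinel substitution and the thresholding together touch only $O(P)$ entries and we invoke the skewed-convolution algorithm exactly once, the total running time is $T(P)+O(P)$.

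The only genuine obstacle is recognizing the double-reversal fix for the sign flip; everything afterwards is careful bookkeeping of the additive constants introduced by the reversal, by the non-negativity shift, and by the $\infty$-sentinel. Before writing the general argument I would sanity-check the identity $C[x]=D[2n-x]$ on a tiny instance such as $A=(0,10)$, $B=(10,0)$ (with $n=1$).
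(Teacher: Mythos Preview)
Your reduction is correct: the double reversal $A'[a]=A[n-a]$, $B'[b]=B[n-b]+(n-b)-2n$ does give $D[2n-x]=C[x]$, and the substitution checks out. The paper, however, takes a slicker route that avoids any index reversal. It swaps the roles of the two input vectors: set $A_0[x]=B[x]+x$ and $B_0[x]=A[x]$, compute the $(\max,\min)$-Skewed-Convolution $C_0$ of $A_0$ and $B_0$, and return $C[x]=C_0[x]-x$. The verification is equally short---pushing the $-x$ through the $\min$ turns $A_0[x_1]-x=B[x_1]+x_1-x=B[x_1]-x_2$ and $B_0[x_2]+x-x=A[x_2]$, after which one simply renames $x_1\leftrightarrow x_2$---but the transformation is purely at the level of values, with no reindexing of inputs, no output reversal, and no $2n-x$ bookkeeping. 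Your approach works, and your extra care about the non-negativity shift and the $+\infty$-sentinels is thorough (the paper glosses over both), but the role-swap trick is the cleaner reduction here.
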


\begin{proof}

    Given $A$ and $B$, construct two auxiliary vectors $A_0$ and $B_0$ defined by $A_0[x]=B[x]+x$ and $B_0[x]=A[x]$ for each entry~$x$. Compute the $(\max,\min)$-Skewed-Convolution of $A_0$ and $B_0$, and let $C_0$ denote the resulting vector. We claim that the vector $C$ defined by $C[x]=C_0[x]-x$ equals $A \otimes B$. Indeed, we have
    \begin{eqnarray*}
        C_0[x]-x &=& \max_{x_1+x_2=x}\min\{A_0[x_1],B_0[x_2]+x\}-x\\
                 &=& \max_{x_1+x_2=x}\min\{A_0[x_1]-x,B_0[x_2]\}\\
                 &=& \max_{x_1+x_2=x}\min\{B[x_1]+x_1-x,A[x_2]\}\\
                 &=& \max_{x_1+x_2=x}\min\{B[x_1]-x_2,A[x_2]\}\\
                 &=& \max_{x_1+x_2=x}\min\{A[x_1],B[x_2]-x_1\},
    \end{eqnarray*}
    where in the third step we expanded the definition of $A_0$ and $B_0$ and in the last step we used the symmetry of $x_1$ and $x_2$.
\end{proof}

We are now in position to describe our algorithm called \textsc{ConvScheduler} which is depicted in Algorithm~\ref{alg:conv-scheduler}.
The algorithm first computes the subset sums $\SSS(X_1), \ldots, \SSS(X_{D_\#})$, and the set of vectors $\mathcal{M} = \{M_1, \ldots, M_{D_\#}\}$. Following this, it iteratively combines every two consecutive vectors in $\mathcal{M}$ by using the $\otimes$ operation. The algorithm terminates when $\mathcal{M}=\{M_1\}$, where at this stage $M_1$ corresponds to the entire set of input jobs $J$. It then returns $P-x$, where $x$ is the maximum value with $M_1[x] < \infty$; by definition, this corresponds to a schedule for $J$ with $P-x$ total processing time of tardy jobs. For convenience of presentation, we assume that $D_\#$ is a power of 2.

\begin{algorithm}[h]
    \caption{\textsc{ConvScheduler($J$)}} \label{alg:conv-scheduler}
    \begin{algorithmic}[1]
        \State Let $d^{(1)} < \ldots < d^{(D_{\#})}$ denote the different due dates of jobs in $J$.
        \State Compute $X_i = \{p_j : d_j = d^{(i)}\}$ for each $i \in \{1,\ldots,D_\#\}$.
        \State Compute $\SSS(X_1), \ldots, \SSS(X_{D_\#})$.
        \State Compute $\mathcal{M}=\{M_1=M(1),\ldots,M_{D_\#}=M(D_\#)\}$.
        \While{$|\mathcal{M}| > 1$}
        \Statex\textbf{\,\,--\,\,\,}Compute $M_i = M_{2i-1} \otimes M_{2i}$ for each $i \in \{1, \ldots, |\mathcal{M}|/2 \}$.
        \EndWhile
        \vspace{-.2ex}
        \State Return $P-x$, where $x$ is the maximum value with $M_1[x] < \infty$.
    \end{algorithmic}
\end{algorithm}

Correctness of this algorithm follows directly from Lemma~\ref{lem:skewed2scheduling}. To analyze its time complexity, observe that steps 1--4 can be done in $\tilde{O}(P)$ time (using Lemma~\ref{lem:SubsetSumSet}). Step 5 is performed $O(\log D_\#)=O(\log P)$ times, and each step requires a total of $\tilde{O}(P^{7/4})$ time according to Theorem~\ref{thm:SkewedConv}, as the total sizes of all vectors at each step is $O(P)$. Finally, step~6 requires $O(P)$ time. Summing up, this gives us a total running time of $\tilde{O}(P^{7/4})$, and completes the proof of Theorem~\ref{thm:ConvScheduler} (apart from the proof of Theorem~\ref{thm:SkewedConv}).

\section{Fast Skewed Convolutions}
\label{section:SkewedConv}%

In the following section we present our algorithm for $(\max,\min)$-Skewed-Convolution, and provide a proof for Theorem~\ref{thm:SkewedConv}. Let $A = (A[i])^n_{i=0}$ and $B = (B[j])^n_{j=0}$ denote the input vectors for the problem throughout the section.

We begin by first defining the problem slightly more generally, in order to facilitate our recursive strategy later on. For this, for each integer $\ell \in \{0,\ldots,\log n\}$, let $A^\ell=\lfloor A/2^\ell \rfloor$ and $B^\ell=\lfloor B/2^\ell \rfloor$, where rounding is done component-wise. We will compute vectors $C^\ell = (C^\ell[k])^{2n}_{k=0}$ defined by:
\[
C^\ell[k] =\max_{i+j=k} \min \{A^\ell[i],B^\ell[j]+\lfloor k/2^\ell \rfloor \}.
\]
Observe that a solution for $\ell = 0$ yields a solution to the original $(\max,\min)$-Skewed-Convolution problem, and for $\ell \geq \log 2n$ the problem degenerates to $(\max,\min)$-Convolution.

We next define a particular kind of additive approximation of vectors $C^\ell$. We say that a vector $D^\ell$ is a \emph{good approximation} of $C^\ell$ if $C^\ell[k]-2 \leq D^\ell[k] \leq C^\ell[k]$ for each $k \in \{0,\ldots,2n\}$. Now, the main technical part of our algorithm is encapsulated in the following lemma.
\begin{lemma}
    \label{lem:ApproxSkewedConv}%
    There is an algorithm that computes $C^\ell$ in $\tilde{O}(n^{7/4})$ time, given $A^\ell$, $B^\ell$, and a good approximation $D^\ell$ of $C^\ell$.
\end{lemma}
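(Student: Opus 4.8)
Write $q_k:=\lfloor k/2^\ell\rfloor$. The one hypothesis we have not used is the decisive one: since $C^\ell[k]-2\le D^\ell[k]\le C^\ell[k]$, computing $C^\ell$ exactly reduces to $O(1)$ \emph{decision} problems. For $s\in\{1,2\}$ let $E_s\in\{0,1\}^{2n+1}$ be defined by $E_s[k]=1$ iff $C^\ell[k]\ge D^\ell[k]+s$; since $E_2[k]=1$ forces $E_1[k]=1$, we have $C^\ell[k]=D^\ell[k]+E_1[k]+E_2[k]$, so it suffices to compute $E_1$ and $E_2$. Unfolding the definition of $C^\ell$, one has $E_s[k]=1$ exactly when there exist $i+j=k$ with $A^\ell[i]\ge D^\ell[k]+s$ and $B^\ell[j]\ge D^\ell[k]+s-q_k$. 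So each $E_s$ is a Boolean (\textsc{or}-of-\textsc{and}s) convolution of thresholded copies of $A^\ell$ and $B^\ell$, with two twists that block a single FFT call: the threshold $D^\ell[k]+s$ depends on the output coordinate $k$, and on the $B^\ell$-side it is further offset by the skew $q_k$, again depending on $k$.

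The skew twist I would handle block-by-block: $q_k$ is constant, say $v$, on each block $I_v:=\{k:\lfloor k/2^\ell\rfloor=v\}$, there are only $\lceil 2n/2^\ell\rceil+1$ blocks, and on $I_v$ the condition $B^\ell[j]\ge D^\ell[k]+s-q_k$ turns into $B^\ell[j]+v\ge D^\ell[k]+s$, i.e.\ a bona fide per-coordinate thresholded Boolean convolution of $A^\ell$ against the fixed shift $B^\ell+v$, needed only on the $\le 2^\ell$ coordinates of $I_v$. (Capping all entries at $4n$ --- harmless, since $q_k\le 2n$ --- we may also assume all thresholds and skews lie in an interval of size $O(n/2^\ell)$, so that the number of distinct thresholds, and of distinct skews, is $O(n/2^\ell)$.) The per-coordinate-threshold twist I would handle by the classical bucketing/fix-up strategy: partition the distinct values into buckets of $g$ consecutive values, run one FFT Boolean convolution per (block, bucket) with the thresholds rounded down to the bucket floor --- a relaxation, hence a reported $0$ is final --- and re-check each coordinate reported $1$ by brute force over the handful of indices whose value lies in its bucket, after separating off the $O((n/2^\ell)/\tau)$ ``heavy'' values carried by $\ge\tau$ indices and treating each of those with a dedicated convolution.

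The real difficulty, and where I expect the bulk of the work to lie, is in tuning all of this to reach $\tilde O(n^{7/4})$ uniformly over $\ell$. The two twists compound unpleasantly: handling every block with a full $\tilde O(n^{3/2})$-time $(\max,\min)$-convolution, or running one Boolean convolution per distinct $(\mathrm{threshold},\mathrm{skew})$ pair, each only gives $\tilde O(n^2)$ near $2^\ell\approx\sqrt n$. Breaking past $n^2$ appears to require exploiting the given approximation \emph{inside} each block as well --- concretely, a subroutine that, given an additive-$O(1)$ approximation of an ordinary $(\max,\min)$-convolution on a length-$L$ output window, reconstructs that window exactly in roughly $\tilde O(n+\sqrt{nL})$ time; summing such costs over the $\lceil 2n/2^\ell\rceil+1$ blocks (with $\sum_v L_v\le 2n+1$, by Cauchy--Schwarz) gives $\tilde O(n^2/2^\ell+n^{3/2})$, which is $\tilde O(n^{7/4})$ once $2^\ell\ge n^{1/4}$, while the complementary small-$2^\ell$ regime (many short blocks, value range $\ge n^{3/4}$) needs its own careful combination of value-bucketing and heavy-value handling. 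Getting that window subroutine to be genuinely subquadratic, and keeping the brute-force fix-up under control when one value is carried by many indices, is the crux of the proof of Lemma~\ref{lem:ApproxSkewedConv}.
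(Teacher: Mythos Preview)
Your reduction to the two decision vectors $E_1,E_2$ is fine, but from there you take a detour that you yourself cannot close. Blocking by the skew value $q_k=\lfloor k/2^\ell\rfloor$ is precisely the move that creates the $\tilde O(n^2)$ barrier you run into: you end up with $\Theta(n/2^\ell)$ blocks, each requiring a threshold-varying Boolean convolution, and your subsequent patches (a hoped-for $\tilde O(n+\sqrt{nL})$ window subroutine, a separate small-$2^\ell$ regime) are speculative and not carried out. As written, the proposal is a plan for a plan, not a proof.

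The missing idea is that the skew need not be ``peeled off'' by blocking at all. Split $C^\ell[k]$ into the two cases according to which argument of the inner $\min$ wins, giving vectors $L^\ell$ and $R^\ell$ with $C^\ell[k]=\max\{L^\ell[k],R^\ell[k]\}$. To compute $L^\ell$, do a light/heavy split on $k$ by the count $|\{i:D^\ell[k]\le A^\ell[i]\le D^\ell[k]+2\}|$ with threshold $n^{3/4}$: light $k$'s are brute-forced in $\tilde O(n^{7/4})$ total; after discarding entries of $A^\ell$ that only matter for light $k$'s, the surviving $A^\ell$ has $O(n^{1/4})$ distinct values. Now the crucial point: for each surviving value $v$, a \emph{single} $(\max,\min)$-convolution of the indicator vector $A^\ell_v$ (taking $+\infty$ where $A^\ell[i]=v$ and $-\infty$ elsewhere) against $B^\ell$ computes $\max_{i+j=k,\,A^\ell[i]=v}B^\ell[j]$ for every $k$ at once, and since $q_k$ depends only on $k$ you simply add it afterwards and compare to $v$. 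The skew is absorbed for free; no blocking, no window subroutine, and the total is $O(n^{1/4})$ calls to an $\tilde O(n^{3/2})$ $(\max,\min)$-convolution, i.e.\ $\tilde O(n^{7/4})$.
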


We postpone the proof of Lemma~\ref{lem:ApproxSkewedConv} for now, and instead show that it directly yields our desired algorithm for $(\max,\min)$-Skewed-Convolution:

\begin{proof}[of Theorem~\ref{thm:SkewedConv}]
    In order to compute $C = C^0$, we perform an (inverse) induction on~$\ell$: As mentioned before, if $\ell \geq \log 2n$, then we can neglect the ``+ $\lfloor k/2^\ell \rfloor$'' term and compute $C^\ell$ in $\tilde{O}(n^{3/2})=\tilde{O}(n^{7/4})$ time using a single $(\max,\min)$-Convolution computation~\cite{Kosaraju1989}.

    For the inductive step, let $\ell < \log 2n$ and assume that we have already computed
    $C^{\ell+1}$. We construct the vector $D^{\ell}=2C^{\ell+1}$, and argue that it is a
    good approximation of $C^\ell$. Indeed, for each entry $k$, on the one hand, we have:
    \begin{eqnarray*}
        D^{\ell}[k] &=& 2C^{\ell+1}[k] = 2 \cdot \max_{i+j=k} \min \{\lfloor A^\ell[i]/2 \rfloor, \lfloor B^\ell[j]/2 \rfloor +\lfloor k/2^{\ell+1}   \rfloor \}\\ &\leq& \max_{i+j=k} \min \{A^\ell[i],B^\ell[j]+\lfloor k/2^\ell \rfloor \} = C^\ell[k];
    \end{eqnarray*}
    and on the other hand, we have:
    \begin{eqnarray*}
        D^{\ell}[k] &=& 2C^{\ell+1}[k] = 2 \cdot \max_{i+j=k} \min \{\lfloor A^\ell[i]/2 \rfloor, \lfloor B^\ell[j]/2 \rfloor +\lfloor k/2^{\ell+1}   \rfloor \} \\ &\geq& \max_{i+j=k} \min \{A^\ell[i]-1,B^\ell[j]+\lfloor k/2^\ell \rfloor -2\} \geq C^\ell[k] -2.
    \end{eqnarray*}
    Thus, using $D^{\ell}$ we can apply Lemma~\ref{lem:ApproxSkewedConv} above to obtain $C^{\ell}$ in $\tilde{O}(n^{7/4})$ time. Since there are $O(\log n)$ inductive steps overall, this is also the overall time complexity of the algorithm.
\end{proof}

It remains to prove Lemma~\ref{lem:ApproxSkewedConv}. Recall that we are given $A^\ell$, $B^\ell$, and $D^\ell$, and our goal is to compute the vector $C^\ell$ in $\tilde{O}(n^{7/4})$ time. We construct two vectors $L^\ell$ and $R^\ell$ with $2n$ entries each, defined by\[
L^\ell[k] = \max\left\{A^\ell[i_0] : \text{\parbox{5.4cm}{\centering$A^\ell[i_0] \leq B^\ell[k-i_0] + \lfloor k/2^\ell \rfloor$ and\\[.3ex] $D^\ell[k] \leq A^\ell[i_0] \leq D^\ell[k] + 2$}} \right\},
\] and \[
R^\ell[k] = \max\left\{
    B^\ell[j_0] + \lfloor k/2^\ell \rfloor : \text{\parbox{6.2cm}{\centering$B^\ell[j_0]+\lfloor k/2^\ell \rfloor \leq A^\ell[k-j_0]$ and\\[.3ex]$D^\ell[k] \leq B^\ell[j_0] + \lfloor k/2^\ell \rfloor \leq D^\ell[k] + 2$}} \right\}
\]
for $k\in\{0,\ldots,2n\}$. That is, $L^\ell[k]$ and $R^\ell[k]$
respectively capture the largest value attained as the left-hand side or right-hand side of the inner min-operation in $C^\ell[k]$, as long as that value lies in the feasible region approximated by $D^\ell[k]$. Since $D^\ell$ is a good approximation, the following lemma is immediate from the definitions:
\begin{lemma}
    \label{lem:LandR}%
    $C^\ell[k]=\max\{L^\ell[k],R^\ell[k]\}$ for each $k\in\{0,\ldots,2n\}$.
\end{lemma}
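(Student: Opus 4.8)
The plan is to prove the two inequalities of the claimed identity $C^\ell[k]=\max\{L^\ell[k],R^\ell[k]\}$ separately, reading everything off the definitions of $L^\ell$, $R^\ell$, $C^\ell$ together with the good-approximation property $C^\ell[k]-2\le D^\ell[k]\le C^\ell[k]$.

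For the direction $\max\{L^\ell[k],R^\ell[k]\}\le C^\ell[k]$, I would note that any index $i_0$ feasible in the maximization defining $L^\ell[k]$ satisfies $A^\ell[i_0]\le B^\ell[k-i_0]+\lfloor k/2^\ell\rfloor$, so $\min\{A^\ell[i_0],B^\ell[k-i_0]+\lfloor k/2^\ell\rfloor\}=A^\ell[i_0]$. Since this is one of the terms over which $C^\ell[k]$ maximizes (taking $i=i_0$, $j=k-i_0$), we get $A^\ell[i_0]\le C^\ell[k]$, hence $L^\ell[k]\le C^\ell[k]$; the symmetric argument with the roles of $A^\ell$ and $B^\ell$ swapped gives $R^\ell[k]\le C^\ell[k]$. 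The extra window constraint $D^\ell[k]\le\cdot\le D^\ell[k]+2$ plays no role here — it can only shrink the feasible sets.

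For the converse $C^\ell[k]\le\max\{L^\ell[k],R^\ell[k]\}$, I would fix a pair $(i^*,j^*)$ with $i^*+j^*=k$ attaining $C^\ell[k]=\min\{A^\ell[i^*],B^\ell[j^*]+\lfloor k/2^\ell\rfloor\}$ (such a pair exists, as all entries are finite non-negative integers), and split into cases by which side realizes the inner minimum. If $A^\ell[i^*]\le B^\ell[j^*]+\lfloor k/2^\ell\rfloor$, then $C^\ell[k]=A^\ell[i^*]$, and the good-approximation bound yields $D^\ell[k]\le C^\ell[k]=A^\ell[i^*]$ and $A^\ell[i^*]=C^\ell[k]\le D^\ell[k]+2$; together with $A^\ell[i^*]\le B^\ell[k-i^*]+\lfloor k/2^\ell\rfloor$ this makes $i^*$ feasible in the definition of $L^\ell[k]$, so $L^\ell[k]\ge A^\ell[i^*]=C^\ell[k]$. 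Symmetrically, if $B^\ell[j^*]+\lfloor k/2^\ell\rfloor\le A^\ell[i^*]$, then $j^*$ is feasible in the definition of $R^\ell[k]$ and $R^\ell[k]\ge C^\ell[k]$. In either case $\max\{L^\ell[k],R^\ell[k]\}\ge C^\ell[k]$, and combining with the first direction gives equality.

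Since the argument is essentially a direct unwinding of the definitions, I do not expect a genuine obstacle. The only points needing care are: (i) invoking the good-approximation property in the right direction — it is precisely what guarantees that the optimal value $C^\ell[k]$ lies in the window $[D^\ell[k],D^\ell[k]+2]$ and is therefore not discarded by the additional constraints on $L^\ell$ and $R^\ell$; and (ii) the boundary situations — a tie in the inner $\min$ (both cases apply at once, so either conclusion suffices) and a degenerate case where one of the two feasible sets is empty (the corresponding $L^\ell[k]$ or $R^\ell[k]$ is then read as $-\infty$ and the other term dominates).
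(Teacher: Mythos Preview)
Your proof is correct and is exactly the direct unwinding the paper has in mind; the paper itself does not spell out any argument beyond declaring the lemma ``immediate from the definitions'' once $D^\ell$ is a good approximation, and your two-inequality argument is the natural way to make that immediacy explicit. One tiny quibble: the existence of a maximizing pair $(i^*,j^*)$ follows simply because the index set $\{(i,j):i+j=k\}$ is finite, not from finiteness of the entries.
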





According to Lemma~\ref{lem:LandR}, it suffices to compute $L^\ell$ and $R^\ell$. We focus on computing $L^\ell$ as the algorithm for computing $R^\ell$ follows after applying minor modifications.

Let $0 < \delta < 1$ be a fixed constant to be determined later.  We say that an index $k \in \{0,\ldots,n\}$ is \emph{light} if
\[
|\{i : D^\ell[k] \leq A^\ell[i] \leq D^\ell[k]+2\}| \leq n^\delta.
\]
Informally, $k$ is light if the number of candidate entries $A^\ell[i]$ which can equal $C^\ell[k]$ is relatively small (recall that $D^\ell[k] \leq C^\ell[k] \leq D^\ell[k] +2$, as $D^\ell$ is a good approximation of~$C^\ell$). If $k$ is not light then we say that it is \emph{heavy}.

Our algorithm for computing $L^\ell$ proceeds in three main steps: In the first step it handles all light indices, in the second step it sparsifies the input vector, and in the third step it handles all heavy indices:
\begin{itemize}
    \item \emph{Light indices:} We begin by iterating over all light indices $k \in \{0,\ldots,2n\}$. For each light index $k$, we iterate over all entries $A^\ell[i]$ satisfying $D^\ell[k] \leq A^\ell[i] \leq D^\ell[k]+2$, and set $L^\ell[k]$ to be the maximum $A^\ell[i]$ among those entries with $A^\ell[i] \leq B^\ell[k-i] + \lfloor k/2^\ell \rfloor$.
        Note that after this step, we have \[
        L^\ell[k]=\max\{A^\ell[i_0] : A^\ell[i_0] \leq B^\ell[k-i_0] + \lfloor k/2^\ell
    \rfloor \text{ and } D^\ell[k] \leq A^\ell[i_0] \leq D^\ell[k] + 2\}\] for each light index $k$.

        \bigskip
    \item \emph{Sparsification step:} After dealing with the light indices, several entries of~$A^\ell$ become redundant. Consider an entry $A^\ell[i]$ for which $|\{i_0 : A^\ell[i]-2 \leq A^\ell[i_0] \leq A^\ell[i] +2 \}| \leq n^\delta$. Then all indices $k$ for which $L^\ell[k]$ might equal $A^\ell[i]$ must be light, and are therefore already dealt with in the previous step. Consequently, it is safe to replace $A^\ell[i]$ by~$-\infty$ so that $A^\ell[i]$ no longer plays a role in the remaining computation.

        \bigskip
    \item \emph{Heavy indices:} After the sparsification step $A^\ell$ contains few distinct values. Thus, our approach is to fix any such value $v$ and detect whether $L^\ell[k] \geq v$. To that end, we translate the problem into an instance of $(\max,\min)$-Convolution: Let $(A^\ell_v[i])^n_{i=0}$ be an be an indicator-like vector defined by $A^\ell_v[i] = +\infty$ if $A^\ell[i]=v$, and otherwise $A^\ell_v[i] = -\infty$. We next compute the vector $L^\ell_v$ defined by
        $L^\ell_v[k]=\lfloor k/2^\ell \rfloor + \max_{i+j=k} \min\{A^\ell_v[i],B^\ell[j]\}$ using a single computation of $(\max,\min)$-Convolution.

        \hspace{15pt} We choose \[
    L^\ell[k] = \max\{v : \text{$L^\ell_v[k] \geq v$ and $D^\ell[k] \leq v \leq D^\ell[k] + 2$}\}\]
        for any heavy index $k$ and claim that $L^\ell[k]$ equals $\max\{A^\ell[i_0] : A^\ell[i_0] \leq B^\ell[k-i_0] + \lfloor k/2^\ell \rfloor\}$.
        On the one hand, if $L^\ell_v[k] \geq v$ then there are indices $i$ and $j$ with $i+j=k$ for which $A^\ell[i]=v$ and $B^\ell[j] + \lfloor k/2^\ell \rfloor \geq A^\ell[i]=v$. Thus, the computed value $L^\ell[k]$ is not greater than
        \[ L^\ell[k] \le \max\{A^\ell[i_0] : \text{$A^\ell[i_0] \leq B^\ell[k-i_0] + \lfloor k/2^\ell \rfloor$ and $D^\ell[k] \leq A^\ell[i_0] \leq D^\ell[k] + 2$}\}.\]
        On the other hand, for all values $v$ for which $A^\ell[i]=v$ for some $i \in \{0,\ldots,n\}$, we have if $v = A^\ell[i] \leq B^\ell[k-i] + \lfloor k/2^\ell \rfloor$ then $A^\ell_v[i] = -\infty$, which in turn implies that $A^\ell_v[i] \geq B^\ell[k-i] + \lfloor k/2^\ell \rfloor \geq A^\ell[i] = v$.
        Thus, our selection of $L^\ell[k]$ is also at least as large as
        \[L^\ell[k] \geq \max\{A^\ell[i_0] : \text{$A^\ell[i_0] \leq B^\ell[k-i_0] + \lfloor k/2^\ell \rfloor$ and $D^\ell[k] \leq A^\ell[i_0] \leq D^\ell[k] + 2$}\},
        \] and hence, these two values must be equal.
\end{itemize}

This completes the description of our algorithm. As we argued its correctness above, what remains is to analyze its time complexity. Note that we can determine in $O(\log n)$ time whether an index $k$ is light or heavy, by first sorting the values in $A^\ell$. For each light index $k$, determining $L^\ell[k]$ can be done in $O(n^\delta)$ time (on the sorted $A^\ell$), giving us a total of $\tilde{O}(n^{1+\delta})$ time for the first step. For the second step, we can determine whether a given entry $A^\ell[i]$ can be replaced with $-\infty$ in $O(\log n)$ time, giving us a total of $\tilde{O}(n)$ time for this step.

Consider then the final step of the algorithm. Observe that after exhausting the sparsification step, $A^\ell$ contains at most $O(n^{1-\delta})$ many distinct values: For any surviving value~$v$, there is another (perhaps different) value $v'$ of difference at most 2 from $v$ that occurs at least $n^\delta$ times in~$A^\ell$, and so there can only be at most $O(n^{1-\delta})$ such distinct  values. Thus, the running time of this step is dominated by the running time of $O(n^{1-\delta})$ $(\max,\min)$-Convolution computations, each requiring $\tilde{O}(n^{3/2})$ time using the algorithm of~\cite{Kosaraju1989}, giving us a total of $\tilde{O}(n^{5/2-\delta})$ time for this step.

Thus, the running time of our algorithm is dominated by the $\tilde{O}(n^{1+\delta})$ running time of its first step, and the $\tilde{O}(n^{5/2-\delta})$ running time of its last step. Choosing $\delta=3/4$ gives us $\tilde{O}(n^{7/4})$ time for both steps, which is the time promised by Lemma~\ref{lem:ApproxSkewedConv}. Thus, Lemma~\ref{lem:ApproxSkewedConv} holds.

\section{Discussion and Open Problems}
\label{section:discussion}%

In this paper we presented two algorithms for the $1||\sum p_jU_j$ problem; the first running in $\tilde{O}(P^{7/4})$ time, and the second running in $\tilde{O}(\min\{P \cdot D_{\#}, P + D\})$ time. Both algorithms provide the first improvements over the classical Lawler and Moore algorithm in 50 years, and use more sophisticated tools such as polynomial multiplication and fast convolutions. Moreover, both algorithms are very easy to implement given a standard ready made FFT implementation for fast polynomial multiplication. Nevertheless, there are still a few ways which our results can be improved or extended:
\begin{itemize}
    \item \emph{Multiple machines:} A natural extension of the $1||\sum p_jU_j$ problem is to the setting of multiple parallel machines, the $Pm||\sum p_jU_j$. Lawler and Moore's algorithm can be used to solve $Pm||\sum p_jU_j$ in $O(P^m \cdot n)$ time, where $m$ is the number of machines. A priori, there is no reason to believe that this cannot be improved to $\tilde{O}(P^m)$, or even better. It is not hard to extend the algorithm in  Theorem~\ref{thm:SumsetScheduler} to an algorithm with running time $\tilde{O}(P^m \cdot D_{\#})$ for the $m$ parallel machine setting, by using $m$-variate polynomials for implementing sumsets in Lemma~\ref{lem:sumset}. However, a similar extension for the algorithm in Theorem~\ref{thm:ConvScheduler} is far less direct.

        \bigskip
    \item \emph{Even faster skewed convolutions:} We have no indication that our algorithm for $(\max,\min)$-Skewed-Convolution is the fastest possible. It would interesting to see whether one can improve its time complexity, say to $\tilde{O}(P^{3/2})$. Naturally, any such improvement would directly improve Theorem~\ref{thm:ConvScheduler}.

        \hspace{15pt} Conversely, one could try to obtain some sort of lower bound for the problem, possibly in the same vein as Theorem~\ref{thm:Known LB1}. Improving the time complexity beyond $\tilde O(P^{3/2})$ seems difficult as this would directly imply an improvement to the $(\max,\min)$-Convolution problem. Indeed, let $A$, $B$ be a given $(\max,\min)$-Convolution instance and construct vectors $A_0$, $B_0$ with $A_0[i] = N \cdot A[i]$ and $B_0[j] = N \cdot B[j]$ for $N = 2n+1$. If $C_0$ is the $(\max,\min)$-Skewed-Convolution of $A_0$ and $B_0$ (that is, $C_0[k] = \max_{i + j = k} \min\{A_0[i], B_0[j] + k\}$), then the vector $C$ with $C[k] = \lfloor C_0[k] / N \rfloor$ is the $(\max,\min)$-Convolution of $A$ and $B$.

        \bigskip
    \item \emph{Other scheduling problems:} Finally, it will be interesting to see other scheduling problems where the techniques used in this paper can be applied. A good first place to start might be to look at other problems which directly generalize Subset Sum.
\end{itemize}

\bibliographystyle{plain}
\bibliography{biblo}

\end{document}